\DeclareMathOperator{\erf}{erf}
\newtheorem{thm}{Theorem}
\newtheorem{lem}{Lemma}
\newtheorem{as}{Assumption}
\newtheorem{cor}{Corollary}
\theoremstyle{definition}
\newtheorem{defin}{Definition}
\begin{document}
\title{Weighted $\ell_1$-minimization for generalized non-uniform sparse model}
%
%
%
%

\author{Sidhant~Misra and Pablo~A.~Parrilo
 \thanks{Laboratory of Information and Decision Systems, Department of Electrical Engineering and Computer Science, MIT, Cambridge, MA, 02139, } }
\maketitle

\begin{abstract}
Model-based compressed sensing refers to compressed sensing with extra structure about the underlying sparse signal known a priori. Recent work has demonstrated that both for deterministic and probabilistic models
imposed on the signal, this extra information can be successfully exploited to enhance recovery performance. In particular, weighted $\ell_1$-minimization with suitable choice of weights has been shown to improve
performance in the so called non-uniform sparse model of signals. In this paper, we consider a full generalization of the non-uniform sparse model with very mild assumptions.
We prove that when the measurements are obtained using a matrix with i.i.d Gaussian entries, weighted $\ell_1$-minimization successfully recovers the sparse signal from its measurements with overwhelming probability. We also provide a method to choose these weights for any general signal model from the non-uniform sparse class of signal models. \end{abstract}



%

\begin{section}{Introduction}
Compressed Sensing has emerged as a modern alternative to traditional undersampling for compressible signals. Previously, the most common way to view recovery of signals from samples was based on the Nyquist criterion.  According to the Nyquist criterion, a band-limited signal has to be sampled at a rate at least twice its bandwidth to allow exact recovery. In this case the band-limitedness
of the signal is the extra known information that allows us to reconstruct the signal from compressed measurements. In compressed sensing the additional structure considered is that the signal is sparse with respect to a certain known basis. As opposed to sampling at Nyquist rate and subsequently compressing, we now obtain linear measurements of the signal and the compression and measurement steps are now combined by obtaining much smaller number of linear measurements than what would be in general required to reconstruct the signal from its measurements. 

After fixing the basis with respect to which the signal is sparse, the process of obtaining the measurements can be written as
	$\mathbf{y} = \mathbf{Ax},$
where,
$\mathbf{y} \in \mathbbm{R}^m$ is the vector of measurements, $\mathbf{x} \in \mathbbm{R}^n$ is the signal and 
$\mathbf{A} \in  \mathbbm{R}^{m \times n}$ represents the $m$ linear functionals acting on the signal $\mathbf{x}$. We call $\mathbf{A}$ the \textit{measurement matrix}.
The signal $\mathbf{x}$ is considered to have at most $k$ non-zero components and we are typically interested in the scenario where $k$ is much smaller than $n$.
Compressed Sensing revolves around the fact that for sparse signals, the number of such linear measurements needed to reconstruct the signal can be significantly smaller than the ambient dimension of the signal itself. The reconstruction problem can be formulated as finding the sparsest solution $\mathbf{x}$ satisfying the constraints imposed by the linear measurements $\mathbf{y}$. This can be represented by
\begin{align*}
	\min \quad & ||\mathbf{x}||_0 \\
	\mbox{subject to} \quad &\mathbf{ y}=\mathbf{Ax}.
\end{align*}
This problem is inherently combinatorial in nature and is in general a NP-hard problem. This is because for a certain value of the size of the support of $\mathbf{x}$ given by $k$, one needs to search through all 
$n \choose k$ possible supports of the signal. Seminal work by Cand\'{e}s and Tao in \cite{CandesTao} and Donoho in \cite{Donoho1} show that under certain conditions on the measurement matrix $\mathbf{A}$, $\ell_1$ norm 
minimization, which can be recast as a linear program, can recover the signal from its measurements. Additionally, a random matrix with i.i.d. Gaussian entries with mean zero satisfies the required condition with overwhelming
probability. Linear programming is known to have polynomial time complexity and the above 
mentioned result tells us that for a large class of measurement matrices $\mathbf{A}$ we can solve an otherwise NP-hard combinatorial problem in polynomial time. Subsequently, iterative methods based on a greedy approach were formulated which 
recover sparse signals from their
 measurements by obtaining an increasingly accurate approximation to the actual signal in each iteration. Examples of these include CoSaMP \cite{NeedellTropp} and IHT \cite{BlumensathDavis}.
The compressed sensing framework has also been generalized to the problem of recovering low rank matrices from compressed linear measurements represented by linear matrix equations \cite{RechtFazelParrilo10}
\cite{RechtXuHassibi10}.

Most of the earlier literature on Compressed Sensing focused on the case where the only constraints on the signal $\mathbf{x}$ are those imposed by its measurements $\mathbf{y}$. On the other hand it is natural to consider the case where besides sparsity, there is certain additional information on the structure of the underlying signal. This is true in several applications, examples of which include encoding natural images (JPEG), MRI and DNA 
microarrays \cite{BaraniukRosing07}, \cite{Sidhant08}. This leads us to Model Based Compressed Sensing where the aim is to devise recovery methods specific to the signal model at hand. Furthermore, one would also want to quantify the possible benefits it has over the standard method (e.g. lesser number of required measurements for the same level of sparsity of the signal). This has been explored in some recent papers. The authors in \cite{Baraniuk1} analyzed a deterministic signal model, were the support of the underlying signal is constrained to belong to a given known set. This defines a subset $\mathcal{M}$ of the set of all $k$-sparse signals, which is now the set of allowable signals. This results in an additional constraint on the original reconstruction problem.
\begin{align*}
	\min \quad &||\mathbf{x}||_0 \\
	\mbox{subject to} \quad &\mathbf{y} = \mathbf{Ax}, \\
	& \mathbf{x} \in \mathcal{M}. 
\end{align*}
It was shown that an intuitive modification to the CoSaMP or IHT method succeeds in suitably exploiting the information about the model. The key property defined in \cite{CandesTao}, known as the Restricted Isometry Property was adapted in \cite{Baraniuk1} to a model based setting. With this, it was shown that results similar to \cite{CandesTao} can be obtained for model-based signal recovery. 

As opposed to this, a probabilistic model, i.e. a Bayesian setting, was considered in \cite{WeiyuThesis}. Under this model there are certain known probabilities associated with the components of the signal $\mathbf{x}$. Specifically, 
$p_i , i = 1,2, \ldots , n$ with $0 \leq p_i \leq1$ are such that
\begin{align*}
	\mathbf {P} (x_i \mbox{  is non-zero}) = p_i \qquad  i = 1,2, \ldots , n.
\end{align*}
The deterministic version of the same model called the ``nonuniform sparse model" was considered in \cite{XuHassibi10}. 
For this, the use of weighted $\ell_1$-minimization  was suggested, given by
\begin{align*}
	\min \quad &||\mathbf{x}||_{w,1} \\
	\mbox{subject to} \quad &\mathbf{y} = \mathbf{Ax},
\end{align*}	
where $ ||\mathbf{x}||_{w,1} = \sum_{i=1}^{n} w_i |x_i|$ denotes the weighted $\ell_1$ norm of $\mathbf{x}$. The quantities $w_i, i=1,2, \ldots, n$ are some positive scalars. Similar to  \cite{Donoho1}, \cite{DonohoTanner05}, ideas based on high dimensional polytope geometry were used to provide sufficient conditions under which weighted $\ell_1$-minimization recovers the sparse signal. This method was 
 introduced earlier in \cite{XuHassibi08} where it was used to analyze the robustness of $\ell_1$-minimization in recovering sparse signals from noisy measurements.
 The specific model considered in \cite{WeiyuThesis} can be described as 
 follows. Consider a partition of the indices $1$ to $n$ into two 
 disjoint sets $T_1$ and $T_2$. Let $p_i = P_1, i\in T_1$ and $p_i = P_2, i \in T_2$. As a natural choice choose the weights in the weighted $\ell_1$-minimization as $w_i = W_1, i \in T_1$ and $w_i = W_2, i \in T_2$. The main
 result of \cite{WeiyuThesis} is that under certain conditions on $\frac{W_2}{W_1}$, $\frac{P_2}{P_1}$ and $\mathbf{A}$, weighted $\ell_1$-minimization can recover a signal drawn from the above model with overwhelming probability. This was later extended in \cite{XuHassibi10} to the case when the support of the sparse signal is divided into $u$ classes, where $u$ is a fixed constant, see Theorem 5.3 in \cite{XuHassibi10}.
 
 In this paper, we consider a generalization of the non-uniform sparse model with mild assumptions. In particular, we keep the assumption that each element of the sparse signal has a probability $p_i$ of being non-zero 
 independent of every other element. To describe the behavior as $n \rightarrow \infty$, we assume that these probabilities converge to some shape function $p(.)$ i.e.,
 \begin{align}
 	\lim_{n \rightarrow \infty} p_{un}  = p(u), \quad 0\leq u \leq 1.
 \end{align}

 \begin{as} \label{as:p}
  	The probability shape function $p(.)$ satisfies the following conditions:

 \begin{itemize}
 	\item[(a)] $p(.)$ is continuous in $[0,1]$ except at finitely many points.
	\item[(b)] Within each interval that $p(.)$ is continuous, it is monotonically non-increasing.
 \end{itemize}
 \end{as}
The above assumptions are mild and many shape functions with finitely many discontinuities can be ``rearranged" to satisfy the second assumption.


The possibility of specifying a non-uniform probability shape function can be useful in several applications. For example, natural images often have sparsity properties after certain transforms, including the Fourier and wavelet transforms. Furthermore, they tend to have significantly more content in the lower frequencies, which causes a natural decay in the likelihood of non-zero Fourier coefficients with increasing frequencies. This decay can then be modeled as the shape function $p(\cdot)$ described above. In the context of specific applications, this information can be deduced, for instance, from the statistics of previous training data.

For a signal drawn from this model, we propose the use of weighted $\ell_1$-minimization to reconstruct it from its compressed linear measurements, and in addition, to match the signal model, we propose that the weights be chosen according to a  shape function $f(.)$ which satisfies the same properties as $p(.)$, except that it is monotonically non-decreasing instead.
Note that, we have still kept the independence assumption, which does not take into account for example correlations in the signal support. However, weighted $\ell_1$-minimization is generally not a suitable choice in that case which is why we do not consider such correlations in our model.
 
  We prove that under certain conditions on $p(.)$, $f(.)$ and the measurement matrix $\mathbf{A}$, we can reconstruct the signal perfectly with overwhelming probability. 
 Although a good part of the machinery in \cite{Donoho1} and \cite{XuHassibi10} applies to our case, difficulties arise because of the generality of the class of functions described above.
 In addition to computing the usual angle exponents, typical of the high-dimensional geometry based analysis in \cite{Donoho1}, we also need to investigate further properties of these exponents with respect to the weight function
 $f(.)$, such as monotonicity, in order to compute a true upper bound on the probability of failure. Additionally, to make sure that the angle exponents, which are described as the solution to an optimization problem, can be efficiently computed numerically, we prove their 
 partial convexity with respect to the optimizing variables. { The angle exponent analysis along with the monotonicity and partial convexity properties help us formulate and prove sufficient conditions for sparse signal recovery that also allow efficient numerical verification. }
 
 A major question that arises in our general setting, is how to choose the function $f(.)$ for the weighted $\ell_1$-minimization according to $p(.)$. We answer this question by
providing a definitive recommendation for the choice of weights based on Gaussian width computations using Gordon's theorem. This, along with our angle exponent analysis, which is generally known to be quite tight, provides a general framework for designing a weighted $\ell_1$-minimization based recovery method as well as
 a way to verify its performance theoretically for the non-uniform sparse model.
 
 The rest of the paper is organized as follows.
 In Section \ref{sec:problem_formulation}, we introduce the basic notation. We formulate the exact questions that we set out to answer in this paper and state our main theorem. In Section \ref{sec:analysisofl1} we focus on how
 weighted $\ell_1$-minimization behaves by restricting our attention to special class of signals that are particularly suitable for the ease of analysis. We later show that the methods generalize to other simple classes and is
 all we need to establish the main result of this paper. In Section \ref{sec:withprob} we describe the key features of a typical signal drawn from our model. We also prove a suitable large deviation result for the probability that a signal drawn from our model lies outside this typical set. In Section \ref{sec:choosing}, we provide a method for deriving a recommended weight function $f(.)$ based on minimizing an upper bound on Gaussian width. In Section \ref{sec:simulation} we provide numerical computations to demonstrate the results we derive and then provide simulation results. We  conclude the paper in Section \ref{sec:conclusion}.
%
%
%
%
\end{section}

\begin{section}{Problem Formulation} \label{sec:problem_formulation}

	\begin{subsection}{Notation and parameters}
		We denote scalars by lower case letters (e.g. c), vectors by bold lower case letters (e.g. \textbf{x}), matrices with bold upper case letters (e.g. \textbf{A}). Probability of an event \textit{E} 
		is denoted by $\mathbf{P}(E)$. The $i^{th}$ standard unit vector in $\mathbbm{R}^n$ is denoted by $e_i = (0, 0 , \ldots , 1, \ldots, 0)^T$, where the ``$1$" is located in the $i^{th}$ position.
		
		The underlying sparse signal is represented by $\mathbf{x} \in \mathbbm{R}^n$, the measurement matrix by $\mathbf{A} \in \mathbbm{R}^{m \times n}$. The vector of observations is 
		denoted by $\mathbf{y}$ and is obtained through linear measurements of $\mathbf{x}$ given by $\mathbf{y} = \mathbf{Ax}$. Typically we would need $n$ linear measurements to be able to recover
		the signal. The scalar $\alpha = \frac{m}{n}$ determines how many measurements we have as a fraction of $n$. We call this the compression ratio.
	\end{subsection}
		
	\begin{subsection}{Model of the Sparse Signal} \label{sec:signalmodel}
		Let $p:[0,1] \rightarrow [0,1]$ be a continuous monotonically non-increasing function. We call $p(.)$ the \textit{probability shape function} and the reason for this name will become clear from the description below.
		The support of the signal $\mathbf{x}$ is decided by the outcome of $n$ independent Bernoulli 
		random variables. In particular, if $E_i$ denotes the event that $i \in Supp(\mathbf{x})$, then the events $E_i \quad i=1, \ldots ,n$ are independent and $\mathbf{P}(E_i) = p\left( \frac{i}
		{n}\right)$. Although we assume throughout the paper that $p(.)$ is continuous, our result generalizes to piecewise continuous functions as 
		well. Let us denote by $k$ the cardinality of $Supp(\mathbf{x})$. Note that under the above model $k$ is a random variable that can take any value from $0$ to $n$. The expected value of $k$
		is given by $\sum_{i=1}^{n} p\left( \frac{i}{n} \right)$. We denote by $\delta$ the expected fractional sparsity given by $\delta = \frac{1}{n} \mathbf{E}[k] = \frac{1}{n} \sum_{i=1}^{n} p \left( \frac{i}{n}
		\right)$.
		
		We notice here that the signal model described above is much more restrictive than the ones made in Assumption \ref{as:p}. However, as it turns out, the exact same analysis also works for the more general case
		in Assumption \ref{as:p}, so we stick to the simplified model above to keep the notation and analysis less cumbersome.
		
		As is standard in Compressed Sensing literature we assume that the entries of the measurement matrix $\mathbf{A}$ are i.i.d. Gaussian random variables with mean zero and variance 
		one. The measurements are obtained as $\mathbf{y} = \mathbf{Ax}$.
	\end{subsection}
	
	\begin{subsection}{Weighted $\ell_1$-minimization}
		When the fractional sparsity $\delta$ is much smaller than one, a signal sampled from the model described above is sparse. Hence, it is possible to recover the signal from its measurements $\mathbf{y}$
		by $\ell_1$-minimization which is formulated as
		\begin{align*}
			\min \quad & ||\mathbf{x}||_1 \\
			\mbox{subject to} \quad & \mathbf{Ax} = \mathbf{y}.
		\end{align*}
		However this does not exploit the extra information available from the knowledge of the priors. Instead, we use weighted $\ell_1$-minimization to recover the sparse signal which is 
		captured by the following optimization problem:
		\begin{align}
			\min \quad & ||\mathbf{x}||_{\mathbf{w},1} \label{eq:weightedl1}\\ 
			\mbox{subject to} \quad & \mathbf{Ax} = \mathbf{y}. \notag
		\end{align}
		where $\mathbf{w} \in \mathbbm{R}^n$ is a vector of positive weights and $||x||_{\mathbf{w},1} = \sum_{i=1}^{n} w_i |x_i|$ refers to the weighted $\ell_1$ norm of $\mathbf{x}$, for a 
		given weight vector $\mathbf{w}$. The weight vector $\mathbf{w}$ plays a central role in determining whether (\ref{eq:weightedl1}) successfully recovers the sparse 
		signal $\mathbf{x}$. Intuitively, $\mathbf{w}$ should be chosen in a certain way depending on $p(.)$ so as to obtain the best performance (although at this point we have not precisely 
		defined the meaning of this). Keeping in mind the structure of $p_i, \  i = 1,\ldots,n$, we suggest using weights $w_i, \ i = 1, \ldots, n$ which have a similar structure.
		Formally, let $f: [0,1] \rightarrow \mathbbm{R}^+$ be a non-negative non-decreasing continuous function. Then we choose the weights as $w_i = f \left( \frac{i}{n} \right)$. We call $f(.)$ the 
		\textit{weights shape function}.
	
	\end{subsection}
	
	\begin{subsection}{Problem statement}
		In this paper we try to answer the following two questions:
		\begin{itemize}
		\item Given the problem parameters (size of the matrix defined by $m,n$), the functions $p(.)$ and $f(.)$, does weighted $\ell_1$-minimization in (\ref{eq:weightedl1})
		recover the underlying sparse signal $\mathbf{x}$ with high probability?
		\item Given a family of probability shape functions $p(.: \delta)$, how to choose the weight function $f(.)$ that has the best
		performance guarantees? Here $\delta = \int_{0}^{1} p(u; \delta) du$ is the expected sparsity of the model.
		\end{itemize}
		
		We give an answer in the affirmative to the first question, given that the functions $p(.)$ and $f(.) $ satisfy certain specified conditions. 
		This is contained in the main result of this paper which is 
		\begin{thm} \label{thm:mainresult}
			Let the probability shape function $p(.)$ and the weight shape function $f(.)$ be given. Let E be the event that weighted $\ell_1$-minimization described in (\ref{eq:weightedl1})
			 fails to recover the correct sparse vector $\mathbf{x}$.
			There exists a quantity $\bar{\psi}_{tot}(p,f)$ which can be computed explicitly as described in Appendix \ref{completetotalexponent},
			 such that whenever $\bar{\psi}_{tot}(p,f)<0$ the probability of failure $\mathbf{P}(E)$
			of weighted $\ell_1$-minimization decays exponentially with respect to $n$. More precisely, if for some  $\epsilon>0$ we have $\psi_{tot}(p,f) \leq -\epsilon$.
			then there exists a constant $c(\epsilon) >0$ such that for large enough $n$, the probability of failure satisfies $\mathbf{P}(E) \leq e^{-n c(\epsilon)}$.
		\end{thm}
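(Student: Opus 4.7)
The overall strategy is to decouple the two sources of randomness --- the random measurement matrix $\mathbf{A}$ and the random support $S = \mathrm{Supp}(\mathbf{x})$ --- and to bound each contribution at the level of exponents, then combine them into a single rate function $\bar{\psi}_{tot}$.

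First, I condition on the support $S$ and sign pattern $\sigma$ of $\mathbf{x}$. By standard duality, weighted $\ell_1$-minimization recovers $\mathbf{x}$ uniquely iff there is no nonzero $\mathbf{z}$ in the null space of $\mathbf{A}$ with $\sum_{i \in S} w_i \sigma_i z_i \geq \sum_{i \notin S} w_i |z_i|$. Using the rotational invariance of the i.i.d.\ Gaussian ensemble together with the weighted Grassmann angle / polytope framework of \cite{XuHassibi08,XuHassibi10}, this conditional failure probability admits a closed-form upper bound in terms of internal and external angles of weighted cross-polytopes. Because the weights $w_i = f(i/n)$ vary slowly, I partition $\{1,\dots,n\}$ into $L$ contiguous bins $I_1,\dots,I_L$ on which $f$ is essentially constant. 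The conditional failure probability then depends on $S$ only through the profile $\boldsymbol{\beta} = (\beta_1,\dots,\beta_L)$ with $\beta_\ell = |S \cap I_\ell|/|I_\ell|$, and the asymptotic Laplace/steepest-descent analysis of the angle formulas (this is what Section~\ref{sec:analysisofl1} carries out for special, profile-constant signals) yields an exponential bound of the form $\exp\!\bigl(n \psi_{\mathrm{null}}(\boldsymbol{\beta};f)\bigr)$.

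Next, I control the distribution of $\boldsymbol{\beta}$ itself. Since the support indicators are independent Bernoullis with means $p(i/n)$, Cram\'er's theorem delivers a large-deviation bound $\mathbf{P}(\boldsymbol{\beta}) \leq \exp\!\bigl(n \psi_{\mathrm{prof}}(\boldsymbol{\beta};p)\bigr)$, where $\psi_{\mathrm{prof}}$ is a sum of bin-wise binomial Kullback--Leibler rate functions centred at the expected profile $\bar{\beta}_\ell = |I_\ell|^{-1}\sum_{i\in I_\ell} p(i/n)$; this is the large-deviation content of Section~\ref{sec:withprob}. A union bound over an $\varepsilon$-net of profiles then gives
\begin{equation*}
    \mathbf{P}(E) \;\leq\; \mathrm{poly}(n)\cdot \exp\!\Bigl(n\,\sup_{\boldsymbol{\beta}}\bigl[\psi_{\mathrm{null}}(\boldsymbol{\beta};f) + \psi_{\mathrm{prof}}(\boldsymbol{\beta};p)\bigr]\Bigr).
\end{equation*}
One then defines $\bar{\psi}_{tot}(p,f)$ as the $L\to\infty$ limit of this supremum, which becomes an integral functional of $p$ and $f$ (the explicit recipe being the one promised in Section~\ref{completetotalexponent}). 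Whenever $\bar{\psi}_{tot}(p,f) \leq -\epsilon$, choosing $L$ and the net fine enough that the polynomial prefactor and the discretization error are absorbed into $e^{-n c(\epsilon)}$ yields the stated decay with $c(\epsilon) = \epsilon/2$, say.

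The main obstacle I anticipate is step (i): carrying out the Grassmann-angle asymptotics with continuously varying weights and extracting a clean, \emph{jointly} convex/concave rate function $\psi_{\mathrm{null}}$ in $\boldsymbol{\beta}$ and in the external-angle optimization variable. This requires precise asymptotics of the incomplete Gaussian integrals that appear in the internal and external angle formulas, and identifying the worst-case sign pattern $\sigma$ (which by symmetry one expects to be $\sigma_i \equiv +1$ up to a sign flip, but must be verified). A secondary difficulty is showing that the bin-refinement limit $L\to\infty$ commutes with the supremum over profiles and does not inflate the exponent; this relies on the continuity and monotonicity of $p$ and $f$ to uniformly control the within-bin oscillations of both rate functions.
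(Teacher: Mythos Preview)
Your proposal matches the paper's proof at the structural level: the same conditioning on the support, the same Grassmann-angle/internal-external-angle machinery for $\mathbf{P}(E\mid \mathbf{x}\in F)$, the same partition of $\{1,\dots,n\}$ into $r$ (your $L$) bins so that the face is summarized by a profile vector, and the same binomial/KL large-deviation bound on the profile. Two minor points: the sign-pattern worry is a non-issue --- by the symmetry of the Gaussian ensemble and of the weighted cross-polytope all $2^{|S|}$ sign patterns give identical $\mathbf{P}(E\mid\mathbf{x}\in F)$, so the paper simply fixes the first-orthant representative; and no $\varepsilon$-net is needed, since for fixed $r$ the number of profile vectors $\mathbf{k}$ is already polynomial in $n$.

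The one substantive difference is in how you combine the two exponents. You define $\bar\psi_{tot}$ as $\sup_{\boldsymbol\beta}\bigl[\psi_{\mathrm{null}}(\boldsymbol\beta)+\psi_{\mathrm{prof}}(\boldsymbol\beta)\bigr]$. The paper instead uses a typical-set argument: it splits the sum over profiles into $\epsilon$-typical and atypical parts, bounds the atypical part by $\mathbf{P}(E\mid\mathbf{k})\le 1$ together with Lemma~\ref{typicality1}, and on the typical part bounds $\mathbf{P}(\mathbf{x}\in\mathcal{F}_r(\mathbf{k}))\le 1$. Letting $\epsilon\downarrow 0$ then reduces everything to the \emph{single} typical profile $g_i=\bar p_i$, and the paper's $\bar\psi_{tot}(p,f)$ is precisely $\psi_{\mathrm{null}}$ evaluated at that one profile (this is what Section~\ref{completetotalexponent} records). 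Since $\psi_{\mathrm{prof}}(\bar{\boldsymbol\beta})=0$, your supremum is always $\ge$ the paper's quantity, so your version of the theorem is correct but potentially weaker, and does not literally produce the $\bar\psi_{tot}$ referenced in the statement. The payoff of the paper's route is twofold: the final quantity is a single Grassmann-angle exponent at a fixed face (no outer optimization over $\boldsymbol\beta$), and the delicate issues you flag --- joint concavity of $\psi_{\mathrm{null}}$ in $\boldsymbol\beta$ and interchange of $L\to\infty$ with the supremum --- simply do not arise.
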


		To answer the second question, we need to define 
		 the measure of performance. For a 
		given value of $\alpha = \frac{m}{n}$, let
		\begin{align}
			\bar{\delta} \quad \triangleq \quad \quad \max \quad & \delta \\  \label{deltabaroriginal}
			\mbox{subject to} \quad & \bar{\psi}_{tot}(p(.; \delta), f(.)) \leq 0, 
		\end{align}
		We call $\bar{\delta}$ the guaranteed bound on recoverable sparsity $\delta$. In Section \ref{sec:choosing}, we answer the question of choosing the best $f(.)$, and in Section \ref{sec:simulation} we use numerical 
		computations to demonstrate that this choice indeed outperforms other natural choices when we compare the $\bar{\delta}$ attained by each choice.
	\end{subsection}
	
\end{section}

\begin{section}{Analysis of Weighted $\ell_1$-minimization} \label{sec:analysisofl1}

	In this section, we analyze the performance of weighted $\ell_1$-minimization with weights specified by $f(.)$ on a certain special 
	class of sparse signals $\mathbf{x}$. As we will see in Section \ref{sec:withprob}, we can easily generalize the analysis to the signals 
	drawn from our signal model. 	
	
	Recall that the failure event $E$ from Theorem \ref{thm:mainresult} is defined as
	 the event that weighted $\ell_1$-minimization fails to recover the correct sparse vector $\mathbf{x}$. We call the probability of this event $\mathbf{P}(E)$
	as the probability of failure. Without loss of generality we can assume that 
	$||\mathbf{x}||_{\mathbf{w},1} = 1$, that is $\mathbf{x}$ lies on a $k-1$-dimensional face of the weighted cross-polytope
	\begin{align*}
		\mathcal{P} \triangleq \{ x \in \mathbbm{R}^n \ s.t. \ ||\mathbf{x}||_{\mathbf{w},1} \leq 1\}
	\end{align*}
	which is the weighted $\ell_1$-ball in $n$ dimensions. The specific face of  $\mathcal{P}$ on which $\mathbf{x}$ lies is determined by the support of $\mathbf{x}$. The probability of failure $\mathbf{P}(E)$
	 can be written as 
	\begin{align*}
		\mathbf{P}(E) = \sum_{F \in \mathcal{F}} \mathbf{P}(\mathbf{x} \in F) \mathbf{P}(E|\mathbf{x} \in F),
	\end{align*}
	where $\mathcal{F}$ is the set of all faces of the polytope $\mathcal{P}$.
	Then, as shown in \cite{WeiyuThesis}, the event $\{E| \mathbf{x} \in F\}$ is precisely the event that there exists a non-zero $u \in null(\mathbf{A})$ such that 
	$||\mathbf{x + u}||_{\mathbf{w},1} \leq ||\mathbf{x}||_{\mathbf{w},1}$ conditioned on $\{  \mathbf{x} \in F\}$.
	 Also since $\mathbf{A}$ has i.i.d. Gaussian entries, sampling from the null space of $\mathbf{A}$ is equivalent to sampling a subspace uniformly from the Grassmann
	manifold $Gr_{(n-m)}(n)$. So conditioned on $\{\mathbf{x} \in F\}$ the event $\{E|\mathbf{x} \in F\}$ is same as 
	the event that a uniformly chosen $(n-m)$-dimensional subspace shifted to $\mathbf{x}$ intersects the polytope $\mathcal{P}$ at a point other 
	than $\mathbf{x}$. The probability of the above event is also called the complementary Grassmann Angle for the face $F$ with respect to the polytope $\mathcal{P}$ under the Grassmann manifold
	$Gr_{(n-m)}(n)$. Based on work by Santal\'{o} \cite{San52} and McMullen \cite{McM75} the Complementary Grassmann Angle can be expressed explicitly as the sum of products of 
	 internal and external angles.
	\begin{align} \label{eq:decomposition}
		\mathbf{P}(E|\mathbf{x} \in F) = 2 \sum_{s \geq 0 }^{} \sum_{G \in J(m+1+2s)}^{} \beta(F,G) \gamma(G,\mathcal{P})
	\end{align}
	where $\beta(F,G)$ and $\gamma(G,\mathcal{P})$ are the internal and external angles and $J(r)$ is the set of all $r$-dimensional faces of $\mathcal{P}$. The definitions of internal and external angles can be found in
	 \cite{WeiyuThesis}). We include them here for completeness.
	 \begin{itemize}
		\item	 The \emph{internal angle} $\beta(F,G)$ is the fraction of the volume of the unit ball covered by the cone obtained by observing the
		 face $G$ from any any point in face $F$. The quantity $\beta(F,G)$ is defined to be zero if 
		 $F$ is not contained in $G$ and is defined to be one if $F = G$. In Figure \ref{fig:intext}, if the face $F$ is the vertex $C$ and the face $G$ is the face formed by $ABC$, then $\beta(F,G) = \beta$ as labelled.
	 	\item The \emph{external angle} $\gamma(G,\mathcal{\mathcal{P}})$ is defined to be the fraction of the volume of the unit ball covered by the cone formed by the outward normals to the hyperplanes supporting
	 	$\mathcal{P}$ at the face $G$. If $G  = \mathcal{P}$ then $\gamma(G,\mathcal{P})$ is defined to be one. {In Figure \ref{fig:intext}, if $G$ is the edge $BD$, then the external angle $\gamma(G, \mathcal{P}) = \gamma$
		as labelled. The ray $r_1$ is orthogonal to the face $BDC$ and the ray $r_2$ is orthogonal to the face $BDA$, i.e., normals to the supporting hyperplanes of $G =  BD$.}
	 \end{itemize}
	
	\begin{figure}
		\begin{center}
		\includegraphics[scale = 0.6]{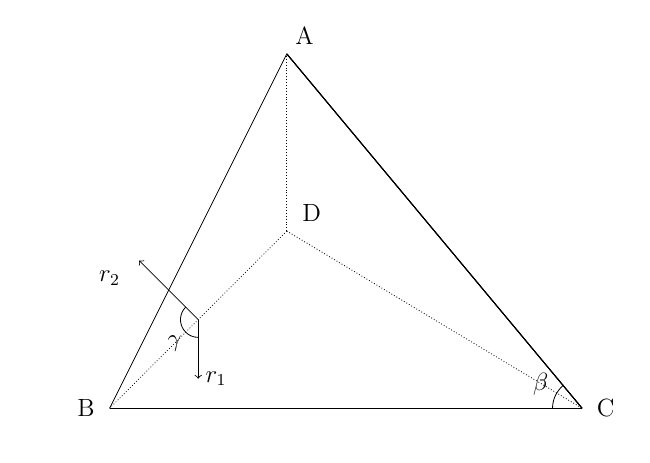}
		\end{center}
		\caption{Internal and External angles of faces of a polytope.}
		\label{fig:intext}
	\end{figure}
	
	In this section we describe a method to obtain upper bounds for  $\mathbf{P}(E|\mathbf{x} \in F)$ by finding upper bounds on the internal and external angles described above.
	We first analyze $\mathbf{P}(E | \mathbf{x} \in F)$ for the ``simplest" class of faces. We denote by 
	${F_0^k}$, the face whose vertices are given by $\frac{1}{w_1} e_1, \frac{1}{w_2} e_2, \ldots , \frac{1}{w_k} e_k$. Thus the vertices are defined by the first $k$ indices and
	we call this face the ``leading" $k-1$-dimensional face of $\mathcal{P}$. 
	We will spend much of this section developing bounds for $\mathbf{P}(E|\mathbf{x} \in F_0^k)$ for such ``leading" faces.
	 Then in Section \ref{sec:withprob}, we describe the \textit{typical set} of our signal model
	and show that for the purposes of bounding $\mathbf{P}(E)$ it is sufficient to consider a certain special class of faces. The bounds we develop in this section for ``leading" faces can be
	 easily generalized to faces belonging to this special class.
	
	\begin{subsection}{Angle Exponents for leading faces} \label{sec:firstface}
		We define the family of leading faces $\mathcal{F}_1$ as the set of all faces whose vertices are given by  $\frac{1}{w_1} e_1, \frac{1}{w_2} e_2, \ldots , \frac{1}{w_k} e_k$ for some $k$. 
		In this section we will establish the following result which bounds the internal and external angles related to a leading face of $\mathcal{P}$. We will then use this result in the next section 
		to provide an upper bound
		on $\mathbf{P}(E | \mathbf{x} \in F)$ for $F \in \mathcal{F}_1$.
		\begin{thm} \label{thm:firstfaceexponents}
			Let $\frac{k}{n} = \delta$ and $\frac{l}{n} = \tau$ with $\tau > \delta$. 
			Let $G_0^{l}$ be the face whose vertices are given by
			$\frac{1}{w_1}e_1, \frac{1}{w_2}e_2, \ldots , \frac{1}{w_l}e_l$ and
			let $F_0^{k} \subset G_0^{l}$ be the face whose vertices are given by $\frac{1}{w_1}e_1, \frac{1}{w_2}e_2, \ldots , \frac{1}{w_k}e_k$.
			Then there exist quantities
			 $\bar\psi_{ext}(\tau)$ and $\bar\psi_{int}(\delta,\tau)$ which can be computed explicitly as described in Section \ref{sec:firstfaceexternalangle} and
			 Section \ref{sec:firstfaceinternalangle} respectively, such that for any $\epsilon > 0$, there exist integers $n_0(\epsilon)$ and $n_1(\epsilon)$ satisfying  
			\begin{itemize}
				\item $n^{-1}\log(\beta(F_0^{k},G_0^{l})) <  \bar{\psi}_{int}(\delta,\tau) + \epsilon$, for all $n > n_0(\epsilon)$,
				\item $n^{-1}\log(\gamma(G_0^{l},\mathcal{P})) <  \bar{\psi}_{ext}(\tau) + \epsilon$, for all $n > n_1(\epsilon)$.
			\end{itemize}
			The quantities  $\bar{\psi}_{int}(\delta,\tau)$ and $\bar{\psi}_{ext}(\tau)$ are called the optimized internal angle exponent and the external angle exponent respectively.
		\end{thm}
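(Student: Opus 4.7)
The plan is to derive integral representations for $\beta(F_0^k, G_0^l)$ and $\gamma(G_0^l, \mathcal{P})$ analogous to the ones used in the unweighted and two-weight analyses, and then to apply Laplace's method to extract the leading-order exponential behavior as $n \to \infty$. Because the weights $w_i = f(i/n)$ are samples of a continuous function, the discrete sums appearing in the logarithms of these angles will be Riemann sums converging to integrals of $f$; these integrals will then be exactly the closed-form exponents $\bar\psi_{int}(\delta,\tau)$ and $\bar\psi_{ext}(\tau)$ advertised in the statement.

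For the external angle, I would use the fact that the outward normal cone to $\mathcal{P}$ at $G_0^l$ is generated by the vectors $(w_1,\ldots,w_l,\epsilon_{l+1}w_{l+1},\ldots,\epsilon_n w_n)$ as $\epsilon_{l+1},\ldots,\epsilon_n$ range over $\{+1,-1\}$. Parametrizing this cone along the inward normal to $G_0^l$ and evaluating the induced Gaussian measure reduces $\gamma(G_0^l,\mathcal{P})$ to a one-dimensional integral of the form
\begin{align*}
    \gamma(G_0^l,\mathcal{P}) \;=\; c_l \int_0^\infty e^{-a_l y^2} \prod_{j=l+1}^n \erf(w_j y)\,dy,
\end{align*}
where $a_l$ is proportional to $\sum_{i\le l} w_i^2$. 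Taking $\tfrac{1}{n}\log$, the product becomes $\tfrac{1}{n}\sum_{j>l}\log\erf(f(j/n)y)$, which converges uniformly for $y$ in any bounded interval to $\int_\tau^1 \log\erf(f(t)y)\,dt$, while $\tfrac{1}{n}\sum_{i\le l}w_i^2 \to \int_0^\tau f(t)^2\,dt$. Applying Laplace's method in $y$ then identifies $\bar\psi_{ext}(\tau)$ as the value of a one-dimensional supremum over $y\ge 0$.

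For the internal angle, I would use the standard probabilistic representation expressing $\beta(F_0^k, G_0^l)$ as the probability that a linear functional of an $(l-k)$-dimensional Gaussian vector lies in a region determined by the weights $w_{k+1},\ldots,w_l$. After a change of variables this produces an integral of the form
\begin{align*}
    \beta(F_0^k,G_0^l) \;=\; c \int_0^\infty e^{-b y^2} \prod_{j=k+1}^l \phi(w_j y)\,dy
\end{align*}
for a suitable one-dimensional density $\phi$ and constants $b,c$ determined by the first $k$ weights. Introducing a tilt parameter $\lambda$ and performing a saddle-point optimization converts the normalized logarithm of this integral into a Legendre--Fenchel transform of a cumulant generating function; in the limit $n\to\infty$ the discrete sum $\tfrac{1}{n}\sum_{j=k+1}^l \log \mathbf{E}[e^{\lambda f(j/n) Y}]$ is replaced by $\int_\delta^\tau \log \mathbf{E}[e^{\lambda f(t) Y}]\,dt$, and the optimal choice of $\lambda$ yields the ``optimized'' internal angle exponent $\bar\psi_{int}(\delta,\tau)$.

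The main obstacle I expect is the internal angle analysis. The external angle reduces to a single one-dimensional Laplace integral with a smooth, log-concave integrand, so its asymptotic analysis is essentially routine. The internal angle, however, carries an outer saddle-point optimization over the tilt parameter $\lambda$, and interchanging this optimization with the Laplace limit must be justified carefully, as does the uniform convergence of the Riemann sums on $[\delta,\tau]$ in regimes where $f$ can be small. A subsidiary but important point is to ensure that the thresholds $n_0(\epsilon)$ and $n_1(\epsilon)$ depend only on $\epsilon$ and the modulus of continuity of $f$ on $[0,1]$ (and not on the specific face parameters $\delta,\tau$), so that the bounds will be uniform enough to be summed over faces in the subsequent analysis of $\mathbf{P}(E)$.
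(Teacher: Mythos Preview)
Your external angle argument matches the paper's exactly: the paper quotes precisely the one-dimensional integral you write (with $a_l=\sigma_{1,l}=\sum_{i\le l}w_i^2$), replaces the product by $\exp\bigl\{n[\int_\tau^1\log\erf(xf(u))\,du+o(1)]\bigr\}$ via Riemann sums, and applies Laplace's method to obtain $\bar\psi_{ext}(\tau)=-\min_x\bigl(c_2x^2-\int_\tau^1\log\erf(xf(u))\,du\bigr)$ with $c_2=\int_0^\tau f^2$.

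For the internal angle your overall strategy is right and also coincides with the paper's, but the intermediate representation you wrote is not correct. The internal angle does \emph{not} reduce to a product integral $c\int_0^\infty e^{-by^2}\prod_{j=k+1}^l\phi(w_jy)\,dy$; that is the structure peculiar to the external angle. The actual starting point (the paper imports it as a lemma from prior work) is
\[
\beta(F_0^k,G_0^l)=\frac{\sqrt{\pi}}{2^{\,l-k}}\sqrt{\sigma_{1,l}/\sigma_{1,k}}\;p_Z(0),
\]
where $Z=Y_0-\sum_{p=1}^{l-k}Y_p$ with $Y_0\sim N(0,\tfrac12)$ and each $Y_p$ half-normal with variance $w_{p+k}^2/(2\sigma_{1,k})$. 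The density $p_Z(0)$ is a \emph{convolution}, not a product over~$j$. The paper writes $p_Z(0)=2\int_0^\infty v\,p_{Y_0}(v)F_S(v)\,dv$ with $S=\sum_pY_p$, splits the integral at $\mu_S$, disposes of the tail trivially, and on the main piece invokes the Chernoff bound $F_S(v)\le e^{-\lambda_S^*(v)}$. That Chernoff step is exactly your ``tilt parameter $\lambda$'' and Legendre--Fenchel transform of the cumulant generating function, and from there on your description (Riemann sums $\to$ integrals of $\lambda(sf(\cdot))$ over $[\delta,\tau]$, then Laplace's method in the outer variable) is precisely what the paper does. So the mechanism you identify is correct; only the form of the intermediate representation needs to be fixed before the tilting step makes sense.
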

		
		Whenever any of the angle exponents described above is negative, the corresponding angle
		decays at an exponential rate with respect to the ambient dimension $n$. We will prove Theorem \ref{thm:firstfaceexponents} over the following two 
		subsections by finding the optimized internal and external angle exponents that satisfy the requirements of the theorem. \\
		
		\begin{subsubsection}{\textbf{Internal Angle Exponent}}
			In this subsection we find the optimized internal angle exponent $\bar{\psi}_{int}(\delta,\tau)$ that satisfies the conditions described in Theorem \ref{thm:firstfaceexponents}.	
			We begin by  stating the following result from \cite{WeiyuThesis} which provides the expression for the internal angle of a face $F$ with respect to a face $G$ in
			 terms of the weights associated with their vertices. Subsequently, we find an asymptotic upper bound on the exponent of the internal angle using that expression. In what follows, we denote by
			 $HN(0,\sigma^2)$ the distribution of a half normal random variable obtained by taking the absolute value of a $N(0,\sigma^2)$ distributed random variable.
			\begin{lem} \cite{WeiyuThesis} \label{lem:internalexpression}
				Define $\sigma_{i,j} = \sum_{p=i}^{j}w_p^2.$
				Let $Y_0 \sim N(0,\frac{1}{2})$ be a normal random variable and $Y_p \sim HN(0,\frac{w_{p+k}^2}{2\sigma_{1,k}})$ for $p = 1,2,\ldots ,l-k$ be independent half normal distributed random 
				variables that are independent of $Y_0$. Define the random variable $Z \triangleq Y_0 - \sum_{p=1}^{l-k} Y_p$. Then, 
				\begin{align}
					\beta(F_0^{k},G_0^{l}) &  = \frac{\sqrt{\pi}}{2^{l-k}} \sqrt{  \frac{ \sigma_{1,l}  }{\sigma_{1,k}   }  }p_{Z}(0),  \notag
				\end{align}
				where $p_Z(.)$ is the density function of $Z$.
		\end{lem}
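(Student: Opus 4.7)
The plan is to compute $\beta(F_0^k, G_0^l)$ directly from its definition as a normalized solid angle, reduce it to a Gaussian half-space probability on a codimension-one subspace, and then perform a single change of variables that matches the integrand with the joint density underlying $Z$.

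First I would set up the geometry. For any $x$ in the relative interior of $F_0^k$ one has $x_i > 0$ for $i \le k$ and $x_i = 0$ for $i > k$, so the affine hull of the shifted face is
\begin{equation*}
V \;=\; \mathrm{aff}(G_0^l - x) \;=\; \Bigl\{v \in \mathbb{R}^l : \sum_{i=1}^l w_i v_i = 0\Bigr\},
\end{equation*}
and the only facet inequalities of $G_0^l$ active at $x$ are $v_{k+1} \ge 0, \ldots, v_l \ge 0$. Thus the cone of directions from $x$ into $G_0^l$ is $C = \{v \in V : v_{k+1}, \ldots, v_l \ge 0\}$, and by the rotational invariance of Gaussian measure, $\beta(F_0^k, G_0^l) = \mathbf{P}(G \in C)$ for a standard Gaussian $G$ on $V$. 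Realizing $G$ as $(G_1, \ldots, G_l) \sim N(0, I_l)$ conditioned on $S := \sum_{i=1}^l w_i G_i = 0$, and writing the conditional probability as a ratio of densities, I get
\begin{equation*}
\beta(F_0^k, G_0^l) \;=\; \frac{\mathbf{P}\bigl(G_{k+1} \ge 0,\ldots,G_l \ge 0,\; S = 0\bigr)}{p_S(0)},
\qquad p_S(0) = (2\pi\sigma_{1,l})^{-1/2}.
\end{equation*}
Splitting $S = T_1 + T_2$ with $T_1 := \sum_{i \le k} w_i G_i \sim N(0, \sigma_{1,k})$ and integrating out $G_1,\ldots,G_k$ against the density of $T_1$ evaluated at $-T_2 := -\sum_{j>k} w_j G_j$ gives
\begin{equation*}
\beta \;=\; \sqrt{\sigma_{1,l}/\sigma_{1,k}}\int_{\mathbb{R}_+^{l-k}} e^{-T_2^{\,2}/(2\sigma_{1,k})}\prod_{j=k+1}^{l}\phi(G_j)\,dG_j .
\end{equation*}

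The final step is the change of variables $G_{p+k} = (\sqrt{2\sigma_{1,k}}/w_{p+k})\,y_p$ for $p = 1,\ldots,l-k$, which turns the quadratic exponent into $(\sum_p y_p)^2 + \sigma_{1,k}\sum_p y_p^2/w_{p+k}^2$. Tracking the Jacobian $\sqrt{2\sigma_{1,k}}/w_{p+k}$, each factor $\phi(G_{p+k})\,dG_{p+k}$ becomes $\tfrac{1}{2}\,p_{Y_p}(y_p)\,dy_p$ with $Y_p \sim HN(0, w_{p+k}^2/(2\sigma_{1,k}))$, while $e^{-(\sum_p y_p)^2} = \sqrt\pi\,p_{Y_0}(\sum_p y_p)$ for $Y_0 \sim N(0, 1/2)$. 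Collecting the powers of $2$ and $\sqrt\pi$ leaves a prefactor $\sqrt\pi/2^{l-k}$ in front of $\int p_{Y_0}(\sum_p y_p)\prod_p p_{Y_p}(y_p)\,dy = p_Z(0)$, which is exactly the claimed identity. The conceptually important step is the reduction to the Gaussian half-space probability on $V$; the main obstacle is the bookkeeping in the change of variables — correctly tracking the $\sqrt 2$ scalings, the Jacobian, and the normalizing constants so that the product of $N(0,1)$ densities assembles into the specific $HN(0, w_{p+k}^2/(2\sigma_{1,k}))$ densities and the $N(0,1/2)$ density of $Y_0$ appearing in the statement.
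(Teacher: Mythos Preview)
The paper does not actually prove this lemma; it is quoted verbatim as a result from \cite{WeiyuThesis} and then used as a black box to launch the large-deviation analysis of $p_Z(0)$. So there is no in-paper argument to compare against.

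That said, your derivation is correct and self-contained. The identification of the internal-angle cone as $C=\{v\in V:v_{k+1},\ldots,v_l\ge 0\}$ with $V=\{\sum_{i\le l}w_iv_i=0\}$ is the right geometric reduction, and realizing the standard Gaussian on $V$ as $(G_1,\ldots,G_l)\sim N(0,I_l)$ conditioned on $S=0$ gives exactly
\[
\beta \;=\; \sqrt{\sigma_{1,l}/\sigma_{1,k}}\int_{\mathbb{R}_+^{l-k}} e^{-T_2^{2}/(2\sigma_{1,k})}\prod_{j>k}\phi(G_j)\,dG_j.
\]
Your substitution $G_{p+k}=\sqrt{2\sigma_{1,k}}\,y_p/w_{p+k}$ is precisely the one that turns each factor $\phi(G_{p+k})\,dG_{p+k}$ into $\tfrac12\,p_{Y_p}(y_p)\,dy_p$ for $Y_p\sim HN\bigl(0,w_{p+k}^2/(2\sigma_{1,k})\bigr)$ and turns $e^{-T_2^2/(2\sigma_{1,k})}$ into $\sqrt{\pi}\,p_{Y_0}\bigl(\sum_p y_p\bigr)$; the remaining integral $\int p_{Y_0}(\sum_p y_p)\prod_p p_{Y_p}(y_p)\,dy$ is exactly $p_Z(0)$ by the convolution formula. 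The constant bookkeeping checks out and yields the stated prefactor $\sqrt{\pi}\,2^{-(l-k)}\sqrt{\sigma_{1,l}/\sigma_{1,k}}$.
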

		
		We now proceed to derive an upper bound for the quantity $p_Z(0)$. Much of the analysis is along the lines of \cite{Donoho1}. Let the random variable $S$ be defined as $S = \sum_{p=1}^{l-k} Y_p$. 
		So, $Z = Y_0 - S$ and using the convolution integral, the density function of $Z$ can be written as
		\begin{align*}
			p_{Z}(0) &= \int_{-\infty}^{\infty}p_{Y_0}(-v)p_S(v)dv\\
			 &= 2\int_{0}^{\infty}vp_{Y_0}(v)F_S(v)dv,
		\end{align*}
		where $F_S(v)$ is the cumulative distribution function of $S$.
		Let $\mu_S$ be the mean of the random variable $S$. Then,
		\begin{align*}
			p_Z(0) =& \underbrace{2 \int_{0}^{\mu_S}vp_{Y_0}(v)F_S(v)dv}_{I} \\
			&+ \underbrace{ 2 \int_{\mu_S}^{\infty}vp_{Y_0}(v)F_S(v)dv}_{II}.
		\end{align*} 
		As in \cite{Donoho1}, the second term satisfies $II <  \int_{\mu_S}^{\infty} 2vp_{Y_0}(v)dv= e^{-\mu_S^2}$. As we will see later in this section $\mu_S^2 \sim cn$ for some $c>0$ and hence $II \sim e^{-c^2 n^2}$.
		Since we are interested in computing the asymptotic exponent of $p_Z(0)$, we can ignore $II$ from this computation.
		To bound $I$, we use $F_S(v) \leq exp(-\lambda_S^*(v))$, where $\lambda_S^*(v)$ denotes the rate function (convex conjugate of the characteristic function $\lambda_S(.)$) of the
		 random variable $S$. So we get
		\begin{align} \label{internalcramer}
			I & \leq \frac{2}{\sqrt{\pi}} \int_{0}^{\mu_S}v e^{-v^2 - \lambda_S^*(v)}dv.
		\end{align}
		For ease of notation we define the following quantities.
		\begin{align*}
			s_{k+1,l} &\triangleq \sum_{p=k+1}^{l} w_p \\
			\lambda(s) &\triangleq \frac{1}{2}s^2 + \log (2 \Phi(s)) \\
			\lambda_0(s) &\triangleq \frac{1}{s_{k+1,l}} \sum_{p=1}^{l-k}\lambda(w_{p+k}s) \\
			\lambda_0^*(y) &\triangleq \max_s sy - \lambda_0(s).
		\end{align*}
		Here $\lambda(s)$ is the characteristic function of the standard half normal random variable and $\Phi(x)$ is the standard normal distribution function. 
		Using the above definition, we can express the relation between $\lambda_S^{*}(y)$ and $\lambda_0^*(y)$ as
		\begin{align*}
			\lambda_0^*(y) = \frac{1}{s_{k+1,l}} \lambda_S^*\left(\frac{s_{k+1,l}}{\sqrt{\sigma_{1k}}}y \right)  
		\end{align*}
		We compute 
		\begin{align*}
			\mu_S = \sum_{p=1}^{l-k} \mathbf{E} Y_p = \sqrt{\frac{2}{\pi}} \frac{s_{k+1,l}}{\sqrt{\sigma_{1,k}}}
		\end{align*}
		Changing variables in (\ref{internalcramer}) by substituting $v =  \frac{s_{k+1,l}}{\sqrt{\sigma_{1,k}}} y$, we get $I \leq $
		\begin{align*}
			\frac{s_{k+1,l}^2}{\sqrt{\pi} \sigma_{1,k}} \int_{0}^{\sqrt{  \frac{2}{\pi} }} y  \exp[ - \frac{s_{k+1,l}^2}{2 \sigma_{1,k}} y^2 -  s_{k+1,l}\lambda_0^*(y) ]dy.
		\end{align*}
		Now, as $w_i = f\left( \frac{i}{n}\right) $, we have
		\begin{align*}
			s_{k+1,l} &= \sum_{i=k+1}^{l}f\left(\frac{i}{n}\right)  = n \left( \int_{\delta}^{\tau}f(x)dx + o(1) \right).
		\end{align*}
		Define $c_0(\delta,\tau) \triangleq \int_{\delta}^{\tau}f(x)dx$. This gives us $s_{k+1,l} = n(c_0(\delta,\tau) + o(1))$.
		Similarly,
		\begin{align*}
			\sigma_{1,k} &= \sum_{i=1}^{k}w_i^2 = \sum_{i=1}^{k}f^2\left(\frac{i}{n}\right) \\
			 &= n(\int_{0}^{\delta}f^2(x)dx+o(1)) \\
			 & =  n(c_1(\delta)+o(1)),
		\end{align*}
		where $c_1(\delta)$ is defined as $c_1(\delta) \triangleq \int_{0}^{\delta}f^2(x)dx$.
		This gives us
		\begin{align*}
			 \frac{s_{k+1,l}^2}{2\sigma_{k+1,l}}y^2 &+ s_{k+1,l}\lambda_0^*(y) \\
			 &=  n \left(        \frac{c_0^2}{2c_1}y^2 + c_0 \lambda^*(y)     +o(1) \right) \\
			 &=  n ( \eta(y) + o(1)).
		\end{align*}
		where $\eta(y)$ is defined as $\eta(y) \triangleq \left(        \frac{c_0^2}{2c_1}y^2 + c_0 \lambda_0^*(y)     \right)$.
		Using Laplace's method to bound $I$ as in \cite{Donoho1}, we get
		\begin{align} \label{eq:internalld}
		 	I &\leq R_n e^{-n \eta(y^*)},
		\end{align}
		where $n^{-1}\log(R_n) = o(1),$
		and $y^*$ is the minimizer of the convex function $\eta(y)$.
		Let
		\begin{align*}
			\lambda_0^*(y^*) &= \mbox{max}_s \mbox{ } sy^* - \lambda_0(s) = s^*y^* - \lambda_0(s^*).
		\end{align*}
		Then the maximizing $s^*$ satisfies $\lambda_0^{'}(s^*) = y^*.$
		From convex duality, we have $\lambda_0^{*'}(y^*) = s^*$.
		The minimizing $y^*$ of $\eta(y)$ satisfies
		\begin{align}
			&\frac{c_0^2}{c_1}y^* + c_0 \lambda_0^{*'}(y^*) = 0 \notag \\
			\implies & \frac{c_0^2}{c_1}y^* + c_0s^* = 0. \label{eq:yands}
		\end{align}
		This gives
		\begin{align} \label{eq:sstar}
			\lambda_0^{'}(s^*) = -\frac{c_1}{c_0}s^*.
		\end{align}
		First we approximate $\lambda_0(s)$ as follows
		\begin{align*}
			\lambda_0(s) &= \frac{1}{s_{k+1,l}}\sum_{p=1}^{l-k}\lambda(w_{p+k}s)  \\
			&=  \frac{1}{n c_0}\sum_{p=k+1}^{l}\lambda\left(f\left(\frac{p}{n}\right)s\right) + o(1) \\
			& =  \frac{1}{c_0}\int_{\delta}^{\tau}\lambda\left(sf(x)\right)dx + o(1).
		\end{align*}
		From the above we obtain
		\begin{align*}
			\frac{d}{ds}\lambda_0(s) =   \frac{1}{c_0}\int_{\delta}^{\tau}f(x)\lambda^{'}\left(sf(x)\right)dx + o(1).
		\end{align*}
		Combining this with equation (\ref{eq:sstar}), we can determine $s^*$ up to a $o(1)$ error by finding the solution to the equation
		\begin{align} 
			\int_{\delta}^{\tau}f(x)\lambda^{'}(s^*f(x))dx + c_1s^* = 0 \label{eq:findsstar}.
		\end{align} 
		We define the \textit{internal angle exponent} as
		\begin{align}
			{\psi}_{int}(\beta,\tau,y) & \triangleq -(\tau  - \delta) \log2 -\eta(y) \\
			&=  -(\tau - \delta) \log2 -\left(        \frac{c_0^2}{2c_1}{y}^2 + c_0 \lambda_0^*(y)     \right), \label{eq:internalcrude}
		\end{align}
		and the \textit{optimized internal angle exponent} as
		\begin{align}
			\bar{\psi}_{int}(\beta,\tau) &\triangleq \max_y  {\psi}_{int}(\beta,\tau,y) \\
			&=  -(\tau  - \delta) \log2 -\eta(y^*) \\
			&= -(\tau - \delta) \log2 -\left(        \frac{c_0^2}{2c_1}{y^*}^2 + c_0 \lambda_0^*(y^*)     \right), \label{eq:intexponent}
		\end{align}
		where $y^*$ is determined through $s^*$ from equation (\ref{eq:yands}) and $s^*$ is determined by solving the equation (\ref{eq:findsstar}). From inequality (\ref{eq:internalld})
		 and Lemma \ref{lem:internalexpression} we 
		see that the function $\bar{\psi}_{int}(\beta,\tau)$ satisfies the conditions described in the first part
		of Theorem \ref{thm:firstfaceexponents}.\\

		\end{subsubsection}
		
		\begin{subsubsection}{\textbf{External Angle Exponent}}
			In this subsection we find the optimized external angle exponent $\bar{\psi}_{ext}(\tau)$ that satisfies the statement of Theorem \ref{thm:firstfaceexponents}.
			We proceed similar to the previous section and begin by stating the following lemma from \cite{WeiyuThesis} which provides the expression for the external angle of a face $G$ with respect to the polytope
			 $\mathcal{P}$ in
			 terms of the weights associated with its vertices. We then find an upper bound on the asymptotic exponent of the external angle.
			\begin{lem} \cite{WeiyuThesis} \label{lem:externalexpression}
				The external angle $\gamma(G_0^{l},\mathcal{P})$ is given by
				\begin{align*}
					\gamma(G_0^l,\mathcal{P})  &= \pi^{-\frac{n-l+1}{2}}2^{n-l} \times \\
					& \int_{0}^{\infty}e^{-x^2} \prod_{p=l+1}^{n}\left( \int_{0}^{\frac{w_p x}{\sqrt{\sigma_{1,l}}}} e^{-y_p^2}dy_p\right) dx. \\
				\end{align*}
			\end{lem}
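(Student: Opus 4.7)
The plan is to compute the external angle $\gamma(G_0^l, \mathcal{P})$ directly from its definition as the normalized solid angle subtended by the cone of outward normals to $\mathcal{P}$ at the face $G_0^l$, and then to evaluate this solid angle through a Gaussian probability representation. Concretely, the three substantive steps are: identify the outward normal cone explicitly in terms of $w_1, \ldots, w_n$; express its normalized solid angle as a Gaussian probability in the $(n-l+1)$-dimensional normal space of $G_0^l$; and finally perform a change of variables to match the $e^{-x^2}$ normalization appearing in the stated formula.

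For the first step, I would use the characterization that $c \in \mathbbm{R}^n$ lies in the outward normal cone at $G_0^l$ if and only if $c^T x$ attains its maximum over $\mathcal{P}$ at every vertex $\frac{1}{w_i} e_i$ for $i = 1, \ldots, l$. Since $\max_{x \in \mathcal{P}} c^T x = \max_i |c_i|/w_i$, this forces $c_i = \lambda w_i$ for $i \le l$ with a common $\lambda \ge 0$ and $|c_i| \le \lambda w_i$ for $i > l$. The resulting cone has dimension $n - l + 1$ and is contained in the $(n-l+1)$-dimensional normal space of $G_0^l$, which is spanned by the single vector $(w_1, \ldots, w_l, 0, \ldots, 0)$ together with the coordinate directions $e_{l+1}, \ldots, e_n$.

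For the solid angle evaluation, I would invoke the standard fact that the normalized solid angle of a cone $K \subset \mathbbm{R}^d$ equals $\mathbf{P}(g \in K)$ for a standard Gaussian vector $g$ in $\mathbbm{R}^d$. Choosing the orthonormal basis of the normal space given by $(w_1, \ldots, w_l, 0, \ldots, 0)/\sqrt{\sigma_{1,l}}$ together with $e_{l+1}, \ldots, e_n$, and writing the Gaussian as $(X, Y_{l+1}, \ldots, Y_n)$ with i.i.d.\ standard normal components, the cone membership conditions become $X > 0$ and $|Y_p| \le w_p X / \sqrt{\sigma_{1,l}}$ for $p = l+1, \ldots, n$. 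By independence the probability factorizes as
\begin{align*}
	\gamma(G_0^l, \mathcal{P}) = \int_0^\infty \frac{1}{\sqrt{2\pi}} e^{-x^2/2} \prod_{p=l+1}^n \left( \frac{2}{\sqrt{2\pi}} \int_0^{w_p x / \sqrt{\sigma_{1,l}}} e^{-y^2/2} dy \right) dx.
\end{align*}
Rescaling by $x \mapsto x\sqrt{2}$ and $y \mapsto y\sqrt{2}$ in every integral absorbs the $\sqrt{2\pi}$ factors into the prefactor $\pi^{-(n-l+1)/2} \, 2^{n-l}$ and replaces the Gaussian densities with $e^{-x^2}$ and $e^{-y^2}$, producing the formula in the lemma.

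The main subtlety is the geometric part: verifying that the outward normal cone sits precisely inside the $(n-l+1)$-dimensional normal space of $G_0^l$ and correctly identifying its one-dimensional direction along the weighted vector $(w_1, \ldots, w_l, 0, \ldots, 0)$, so that the $\sqrt{\sigma_{1,l}}$ normalization propagates into the inner integration limits exactly as stated. Once the geometry is set up, the Gaussian computation and the final rescaling are essentially bookkeeping.
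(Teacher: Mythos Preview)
Your proof is correct. The paper does not actually prove this lemma; it is stated with a citation to \cite{WeiyuThesis} and used as a black box, so there is no in-paper argument to compare against. Your approach---identifying the outward normal cone of the weighted cross-polytope at $G_0^l$ as $\{c : c_i = \lambda w_i \text{ for } i \le l,\ |c_p| \le \lambda w_p \text{ for } p > l,\ \lambda \ge 0\}$, representing its solid angle as a Gaussian probability in the $(n-l+1)$-dimensional normal space, and rescaling by $\sqrt{2}$---is the standard derivation and yields exactly the stated formula with the correct prefactor $\pi^{-(n-l+1)/2}\,2^{n-l}$ and inner limits $w_p x/\sqrt{\sigma_{1,l}}$.
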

			
			To simplify the expression for the external angle, we define the standard error function as
			\begin{align*}
				\erf(x) = \frac{2}{\sqrt{\pi}} \int_{0}^{x}e^{-t^2}dt
			\end{align*}
			and rewrite the external angle as
			\begin{align*}
				\gamma(G_0^l,\mathcal{P}) =\sqrt{ \frac{\sigma_{1,l}}{\pi}} \int_{0}^{\infty} e^{-\sigma_{1,l}x^2}\prod_{i=l+1}^{n} \erf(w_ix)dx.
			\end{align*}
			Similar to the method used in the previous section
			\begin{align*}
				\sigma_{1,l} &= \sum_{i=1}^{l}w_i^2 = n \left( \int_{0}^{\tau} f^2(x)dx + o(1) \right) \\
				&= nc_2(\tau) + o(1).
			\end{align*}
			where $c_2(\tau)$ is defined as $c_2(\tau) \triangleq \int_{0}^{\tau} f^2(x)dx $.
			Substituting this we have
			\begin{align*}
				&\gamma(G_0^l,\mathcal{P})  = \sqrt{ \frac{\sigma_{1,l}}{\pi}} \times\\
				&\int_{0}^{\infty} \exp \left[  -n \left(    c_2x^2 -  \sum_{i=l+1}^{n}\frac{\log(\erf(w_ix))}{n}  \right)     \right]dx \\
				=& \sqrt{ \frac{\sigma_{1,l}}{\pi}}  \int_{0}^{\infty} \exp[-n\zeta(x)]dx.
			\end{align*}
			where $\zeta(x)$ is defined as $\zeta (x) \triangleq \left(    c_2x^2 - \frac{1}{n} \sum_{i=l+1}^{n}\log(\erf(w_ix))  \right)$.
			Again using Laplace's method we get 
			\begin{align} \label{externalcramer}
				\gamma(G_0^l,\mathcal{P}) \leq R_n \exp[-n \zeta(x^*)],
			\end{align}
			where $x^*$ is the minimizer of $\zeta(x)$ and $n^{-1}\log(R_n) = o(1)$.
			The minimizing $x^*$ satisfies $2c_2x^* =  G_0^{'}(x^*),$ where
			\begin{align*}
				G_0(x) \triangleq \frac{1}{n} \sum_{i=l+1}^{n}\log(\erf(w_ix)).
			\end{align*}
			We first approximate $G_0(x)$ as follows:
			\begin{align*}
				G_0(x) &= \frac{1}{n} \sum_{i=l+1}^{n}\log(\erf(w_ix)) \\
				&= \frac{1}{n} \sum_{i=l+1}^{n}\log(\erf(f(i/n)x)) \\
				&= \int_{\tau}^{1}\log( \erf( xf(y))  )dy +o(1).
			\end{align*}
			So the minimizing $x^*$ can be computed up to an error of $o(1)$ by solving the equation
			\begin{align}
				2c_2x^* = \int_{\tau}^{1} \frac{f(y)\erf^{'}(x(f(y))}{\erf(xf(y))}dy \label{eq:xstar}.
			\end{align}
			We define the \textit{external angle exponent} as
			\begin{align} \label{eq:externalcrude}
				\bar{\psi}_{ext}(\tau,x) \triangleq -\zeta(x)
			\end{align}
			and the \textit{optimized external angle exponent} as 
			\begin{align}
				&\bar{\psi}_{ext}(\tau) \triangleq \max_x \bar{\psi}_{ext}(\tau) = -\zeta(x^*)  \\
				&= -\left(    c_2{x^*}^2 -   \int_{\tau}^{1}\log( \erf( x^*f(y))  )dy\right) \label{eq:extexponent}.
			\end{align}
			where $x^*$ can be obtained by solving equation (\ref{eq:xstar}).

		\end{subsubsection}
		From (\ref{externalcramer}) it is clear that this function satisfies the conditions in the second part of Theorem \ref{thm:firstfaceexponents}.
		This completes the proof of both parts of the theorem.
		
	\end{subsection}
	
	\begin{subsection}{Recovery Threshold for the family of leading faces} \label{subsec:firstfamilythreshold}
		In this section we use the bounds on the asymptotic exponents of the internal and external angle from Theorem \ref{thm:firstfaceexponents} to find an upper bound for $\mathbf{P}(E|\mathbf{x} \in F)$ for
		$F \in \mathcal{F}_1$. The main result of this section is the following theorem.
		\begin{thm} \label{thm:firstfacetotalexponent}
			 Let $F = F_0^k$ for $k = \delta n$.
			 $F_{\delta} \in \mathcal{F}_1$ be the face with $k = \delta n$. There exists a function $\bar{\psi}_{tot}(\delta)$ which we call the total exponent of $F_0^k$
			 such that given $\epsilon > 0$, there exists $n(\epsilon)$ such that for all $n > n(\epsilon)$, 
			 $\frac{1}{n} \log (\mathbf{P}(E|\mathbf{x} \in F_0^k)) < -n \bar{\psi}_{tot}(\delta) + \epsilon$.
			 The function $\bar{\psi}_{tot}(\delta)$ can be computed explicitly as described in (\ref{eq:totalimprovedexponent}).
		\end{thm}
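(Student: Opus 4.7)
The plan is to combine the Grassmann-angle decomposition in equation (\ref{eq:decomposition}) with the angle bounds from Theorem \ref{thm:firstfaceexponents} and a combinatorial enumeration of the super-faces of $F_0^k$ in $\mathcal{P}$.

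First, I would reparametrize the outer sum in (\ref{eq:decomposition}) by the number of vertices $l$ of the enclosing face $G$, setting $\tau = l/n$. A face $G$ of the weighted cross-polytope $\mathcal{P}$ that contains $F_0^k$ must retain all $k$ vertices of $F_0^k$ and adjoin $l-k$ further vertices chosen from $\pm \frac{1}{w_{k+1}} e_{k+1}, \ldots, \pm \frac{1}{w_n} e_n$, with at most one sign per coordinate. The number of such $G$ is therefore $\binom{n-k}{l-k}\, 2^{l-k}$, whose logarithm is $n\,\psi_{comb}(\delta, \tau) + o(n)$ with
\begin{equation*}
\psi_{comb}(\delta, \tau) = (1-\delta)\, H\!\left(\tfrac{\tau-\delta}{1-\delta}\right) + (\tau-\delta)\log 2,
\end{equation*}
where $H$ denotes the binary entropy function.

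Second, for every admissible $G$ I would argue that $\beta(F_0^k, G)\,\gamma(G,\mathcal{P})$ is dominated, up to subexponential factors, by its value at the leading super-face $G_0^l$. Since $f$ is non-decreasing, the leading extension uses the smallest available weights, and the Laplace-method asymptotics derived in Section \ref{sec:firstface} depend on the weights only through the integrals $c_0,c_1,c_2$, which vary monotonically under swaps that bring any admissible configuration closer to the leading one. Plugging the bounds of Theorem \ref{thm:firstfaceexponents} into each summand of (\ref{eq:decomposition}) then gives
\begin{equation*}
\mathbf{P}(E \mid \mathbf{x} \in F_0^k) \leq 2 \sum_{l} e^{\,n\bigl(\psi_{comb}(\delta,\tau) + \bar\psi_{int}(\delta,\tau) + \bar\psi_{ext}(\tau)\bigr) + o(n)},
\end{equation*}
where the sum runs over at most $O(n)$ values of $l$. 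Defining
\begin{equation*}
\bar\psi_{tot}(\delta) \triangleq \max_{\tau \in [\delta,\,1]} \Bigl(\psi_{comb}(\delta,\tau) + \bar\psi_{int}(\delta,\tau) + \bar\psi_{ext}(\tau)\Bigr),
\end{equation*}
and using that the outer sum is only polynomially long, each term is at most $e^{\,n\bar\psi_{tot}(\delta) + o(n)}$, yielding $\mathbf{P}(E \mid \mathbf{x} \in F_0^k) \leq e^{\,n(\bar\psi_{tot}(\delta) + \epsilon)}$ for all sufficiently large $n$, as claimed.

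The main obstacle I anticipate is the second step: rigorously justifying that the leading-face bounds control $\beta(F_0^k, G)\,\gamma(G,\mathcal{P})$ for arbitrary admissible $G$. The internal angle in Lemma \ref{lem:internalexpression} depends on the specific weights of the extra vertices through a convolution density whose Cram\'er rate function $\lambda_0^*$ is not obviously monotone under weight reassignment. A clean route would be a swap argument: given any non-leading configuration, exchange one used weight with the next smaller unused one and verify via the integral representations in Section \ref{sec:firstface} that the combined exponent $\bar\psi_{int}(\delta,\tau) + \bar\psi_{ext}(\tau)$ only increases, so that $G_0^l$ is extremal within each dimension stratum.
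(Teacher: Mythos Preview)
Your proposal is correct and follows essentially the same route as the paper: decompose via (\ref{eq:decomposition}), count the $\binom{n-k}{l-k}2^{l-k}$ super-faces, reduce each stratum to the leading super-face $G_0^l$, apply Theorem~\ref{thm:firstfaceexponents}, and maximize over $\tau$. The swap argument you anticipate is exactly Lemma~\ref{lem:basic_exponent_monotonicity}, and in the paper it is carried out not on the asymptotic exponents but directly on the exact angle formulas of Lemmas~\ref{lem:internalexpression} and~\ref{lem:externalexpression} (where decreasing a weight manifestly increases $F_S(v)$ and the $\erf$ product), which sidesteps your worry about monotonicity of $\lambda_0^*$; also, the range of $\tau$ should be $\tau\ge\alpha$ (from $l>m$ in (\ref{eq:decomposition})), matching the constraint in~(\ref{eq:totalimprovedexponent}).
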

		Using the decomposition equation (\ref{eq:decomposition}) 
		and the fact that $\beta(F,G)$ is non-zero only if $F \subseteq G$, we get
		\begin{align*}
			\mathbf{P}(E|\mathbf{x} \in F) \leq 2 \sum_{(l > m)} \sum_{(G \supseteq F, \ G \in J(l)) }  \beta(F,G) \gamma(G,\mathcal{P}).
		\end{align*}
		Recall that $J(r)$ is the set of all $r$-dimensional faces of $\mathcal{P}$.
		
		To proceed, we will need the following useful lemma (proof can be found in the appendix).
		\begin{lem} \label{lem:basic_exponent_monotonicity}
			Among all $(l-1)$-dimensional faces $G$ of $\mathcal{P}$ satisfying $F_0^k \subseteq G$, the face that maximizes $\beta(F_0^k,G)$ and $\gamma(G,\mathcal{P})$ is the 
			one with $\frac{1}{w_1} e_1, \frac{1}{w_2} e_2, \ldots , \frac{1}{w_l}e_l$
			as its vertices.
		\end{lem}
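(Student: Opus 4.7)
The proof proceeds by a single-swap monotonicity argument that is then iterated. By the invariance of $\mathcal{P}$ under coordinate sign flips, both $\beta(F_0^k, G)$ and $\gamma(G, \mathcal{P})$ depend on the $(l-1)$-face $G$ only through its vertex index set $T \subseteq \{1,\ldots,n\}$ with $|T|=l$, and the condition $F_0^k \subseteq G$ is equivalent to $T \supseteq \{1,\ldots,k\}$. I will show that for any $a \in T \setminus \{1,\ldots,k\}$ and any $b \notin T$ with $w_a \leq w_b$, the swap $T \mapsto T' := (T \setminus \{a\}) \cup \{b\}$ weakly decreases both $\beta$ and $\gamma$. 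Since $w_i = f(i/n)$ is non-decreasing, the index set $T^* = \{1,\ldots,l\}$ of the claimed maximizer $G_0^l$ can be reached from any admissible $T$ by finitely many reverse swaps, each weakly increasing both angles, which proves the lemma.

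The formulas of Lemmas \ref{lem:internalexpression} and \ref{lem:externalexpression} extend verbatim to an arbitrary face $G_T$ upon replacing $\sigma_{1,l}$ by $\sigma_T := \sum_{i \in T} w_i^2$ and letting the relevant sums and products range over $T \setminus \{1,\ldots,k\}$ (for $\beta$) and $\{1,\ldots,n\} \setminus T$ (for $\gamma$). After substituting $v = x\sqrt{\sigma_T}$ in the external-angle integral, the swap inequality for $\gamma$ follows at once by a pointwise comparison of integrands: $\sigma_{T'} = \sigma_T + w_b^2 - w_a^2 \geq \sigma_T$ together with $w_a \leq w_b$ implies $\erf(w_j v/\sqrt{\sigma_T}) \geq \erf(w_j v/\sqrt{\sigma_{T'}})$ for every $j$ outside $T \cup T'$, and $\erf(w_b v/\sqrt{\sigma_T}) \geq \erf(w_a v/\sqrt{\sigma_{T'}})$ for the swapped factor.

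The hard part is the internal angle. By conditioning on $Y_0 \sim N(0,1/2)$ one obtains $p_{Z_T}(0) = \pi^{-1/2} E[\exp(-S^2)]$ with $S = (\sum_p w_{j_p}|\xi_p|)/\sqrt{2\sigma_{1,k}}$ and $\xi_p \sim N(0,1)$ i.i.d., so that
\begin{align*}
\beta(F_0^k, G_T) = \frac{1}{2^{l-k}\sqrt{\sigma_{1,k}}}\sqrt{\sigma_T}\,E\!\left[\exp\!\left(-\tfrac{1}{2\sigma_{1,k}}\bigl(\textstyle\sum_{p} w_{j_p}|\xi_p|\bigr)^2\right)\right].
\end{align*}
Under a swap with $w_a \leq w_b$, the prefactor $\sqrt{\sigma_T}$ strictly increases while the expectation strictly decreases (coupling $\xi_a$ to $\xi_b$ makes $S$ stochastically larger in the $T'$-configuration), so a pointwise comparison cannot work and one must quantify the trade-off. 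My plan is to view the varied weight as a continuous parameter $t \in [w_a, w_b]$, couple the $\xi_p$'s across values of $t$, and verify that the function $t \mapsto \sqrt{\sigma_T(t)}\, E[\exp(-S_t^2)]$ has non-positive derivative; after a Gaussian integration by parts in the coordinate $\xi_p$ associated with the varied weight, this reduces to a scalar inequality of Cauchy--Schwarz type that can be checked in closed form. A cleaner alternative is to mimic the two-weight argument of \cite{WeiyuThesis}, which manipulates the Laplace-transform representation of $p_{Z_T}(0)$ directly and carries over to our setting with only notational changes. The main obstacle, and the one I expect to consume most of the work, is establishing this derivative inequality; the external-angle half is essentially bookkeeping once the generalized integral formula is written down.
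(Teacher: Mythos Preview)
Your overall architecture---the single-index swap iterated until the leading face is reached, together with the pointwise comparison for the external angle after the substitution $v=x\sqrt{\sigma_T}$---is exactly the paper's proof. The divergence is in the internal angle.

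You correctly observe that under a swap the prefactor $\sqrt{\sigma_T}$ and the factor $p_Z(0)$ move in opposite directions, and you plan to control this trade-off by differentiating in the varied weight and appealing to a Cauchy--Schwarz-type inequality. The paper does not do this. It simply notes that, for fixed $\delta=k/n$ and $\tau=l/n$, the ratio $\sigma_T/\sigma_{1,k}$ is bounded above and below by constants depending only on $f,\delta,\tau$ (since every $w_i\in[f(0),f(1)]$), so the prefactor $\sqrt{\sigma_T/\sigma_{1,k}}$ cannot affect the exponent $n^{-1}\log\beta$. The paper therefore claims only that the \emph{exponent} of $\beta(F_0^k,G)$ is maximized at $G_0^l$, and this is immediate from the representation
\[
p_Z(0)=2\int_0^\infty v\,p_{Y_0}(v)\,F_S(v)\,dv,
\]
because each swap toward $G_0^l$ lowers one half-normal variance and hence raises $F_S(v)$ pointwise.

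In short, the paper's proof is shorter but, read literally, delivers a weaker conclusion than the lemma as stated: it shows $G_0^l$ maximizes the exponent of $\beta$, not $\beta$ itself. That weaker conclusion is all that is used downstream, since the lemma feeds only into the exponent computation of Theorem~\ref{thm:firstfacetotalexponent}. Your plan, if the derivative inequality can be established, would prove the lemma verbatim; the paper's shortcut makes that extra work unnecessary for the application.
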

		 Using Lemma \ref{lem:basic_exponent_monotonicity}, we get $\mathbf{P}(E|\mathbf{x} \in F_0^k)  $	 
		\begin{align*}
			 \leq & \sum_{l=m+1}^n {n-k \choose l-k} 2^{l-k}  \beta(F_0^k,G_0^l) \gamma(G_0^l,\mathcal{P}) \\
			 \leq &(n-m) \max_{\{l : l >m\}}      {n-k \choose l-k}  2^{l-k}  \beta(F_0^k,G_0^l) \gamma(G_0^l,\mathcal{P}).
		\end{align*}
		Using $\tau = \frac{l}{n}$, $\delta = \frac{k}{n}$ and Stirling's approximation for factorials, it can be shown that
		\begin{align*}
			\frac{1}{n}\log {n-k \choose l-k    }      \to  (1 - \delta) H \left(   \frac{\tau-\delta}{1-\delta}   \right),
		\end{align*}
		where $H(x)$ is the entropy function with base $e$ defined by $H(x) = -x \log(x) - (1-x) \log(1-x)$. Define the \textit{combinatorial exponent} $\psi_{com}(\delta,\tau)$ as
		\begin{align*}
			\psi_{com}(\delta,\tau) \triangleq (1 - \delta) H \left(   \frac{\tau-\delta}{1-\delta}   \right) + (\tau - \delta) \log 2.
		\end{align*}
		So, we conclude,
		\begin{align*}
			\frac{1}{n} \log(\mathbf{P}(E| \mathbf{x} \in F)) &\leq \max_{\{\gamma : \gamma \geq \alpha\}} \   \bar{\psi}_{com}(\delta,\tau) \\
			&+ \bar{\psi}_{int}(\delta,\tau) + \bar{\psi}_{ext}(\tau) + o(1).
		\end{align*}
		The function $\bar{\psi}_{tot}(\delta)$ defined by 
		\begin{align*}
			\bar{\psi}_{tot}(\delta) \triangleq  \max_{\{\tau : \gamma \geq \alpha\}} \   &\bar{\psi}_{com}(\delta,\tau) \\
			&+ \bar{\psi}_{int}(\delta,\tau) + \bar{\psi}_{ext}(\tau)
		\end{align*}
		satisfies the condition given in Theorem \ref{thm:firstfacetotalexponent}, and hence establishes the theorem.
		\end{subsection}
		
		\begin{subsection}{Obtaining tighter exponents}
		In Section \ref{subsec:firstfamilythreshold}, we developed a method that provides an upper bound on $\mathbf{P}(E|\mathbf{x} \in F_0^k)$ in terms of internal and external angle exponents.
		For a given value of $l$, we bounded $\beta(F_0^k,G)$ and $\gamma(G,\mathcal{P})$ for each $G \in J(l)$ by 
		$\beta(F_0^k,G_0^l) \gamma(G_0^l,\mathcal{P})$ by using Lemma \ref{lem:basic_exponent_monotonicity}.
		This bounding step is generally rather overly conservative because
		it does not take into account the variation of the function $f(.)$ over the complete interval $[0,1]$. The quality of bound is especially poor for choices of function $f(.)$ which are rapidly increasing. 
		To improve on the bound, we can use a simple technique which allows us to take into account the variation of $f(u)$ with respect to $u$ more accurately over the whole interval.
		
		Divide the set of indices $k+1, \ldots n$ into two parts with $T_1 \triangleq \{k+1, \ldots ,\frac{n+k}{2} \}$ and $T_2 \triangleq \{  \frac{n+k}{2}+1 , \ldots , n  \}$.
		For a particular $l$, let $G$ have $l_1$ vertices in $T_1$ and $l_2$ vertices in $T_2$. Using Lemma \ref{lem:basic_exponent_monotonicity},  among all faces $G$ with the values of $l_1$
		and $l_2$ specified as above, the choice that maximizes $\beta(F_0^k,G)$ and $\gamma(G,\mathcal{P})$ is the face $G$ with vertices given by the indices $1, \ldots ,l_1, \frac{n+k}{2} +1
		, \ldots , \frac{n+k}{2} + l_2$. Using this we get a revised upper bound on $\mathbf{P}(E|\mathbf{x} \in F_0^k)$
		\begin{align*}
			&\leq  \sum_{l=m+1}^{n} \sum_{l_1+ l_2 = l} {{ \frac{n-k}{2}    } \choose l_1     } {{ \frac{n-k}{2}    } \choose l_2     }  \beta(F_0^k,G) \gamma(G,\mathcal{P}) \\
			&\leq  (n-m)n \max_{l_1+ l_2 \geq m - k} {{ \frac{n-k}{2}    } \choose l_1     } {{ \frac{n-k}{2}    } \choose l_2     } \times \\
			& \hspace{1.2in} \beta(F_0^k,G_2) \gamma(G_2,\mathcal{P}), 
		\end{align*}
		Define $\gamma_1 = \frac{l_1}{n}$ and $\gamma_2 = \frac{l_2}{n}$. Then the revised bound on the exponent of $\mathbf{P}(E|\mathbf{x} \in F_0^k)$ can be obtained as 
		\begin{align*}
			\frac{1}{n}&\log(\mathbf{P}(E|\mathbf{x} \in F)) \\
			&\leq  \max_{\gamma_1 
			+ \gamma_2 \geq \alpha - \delta} \frac{1-\delta}{2}\left[H\left( \frac{2\gamma_1}{1-\delta} \right) + H\left( \frac{2\gamma_2}{1-\delta} \right)\right] \\
			&+  \psi_{int}(\delta,	\gamma_1,\gamma_2) + \psi_{ext}(\gamma_1,\gamma_2) + o(1) \\
			&= \max_{\gamma_1 
			+ \gamma_2 \geq \alpha - \delta} \psi_{com}(\delta,\gamma_1,\gamma_2) +  \psi_{int}(\delta,	\gamma_1,\gamma_2) \\
			& \hspace{0.8in}+ \psi_{ext}(\gamma_1,\gamma_2) + o(1).
		\end{align*}
		where we define $ \psi_{com}(\delta,\gamma_1,\gamma_2) \triangleq \frac{1-\delta}{2}H\left( \frac{2\gamma_1}{1-\delta} \right) + \frac{1-\delta}{2}H\left( \frac{2\gamma_2}{1-\delta} \right)$ and call it the 
		\textit{combinatorial exponent} 
		The expressions for $\psi_{int}(\delta,	\gamma_1,\gamma_2)$ and $\psi_{ext}(\gamma_1,\gamma_2)$ are can be obtained by using the methods of the previous subsection
		 (we will give the exact expressions shortly). The function
		 \begin{align*}
		 	\bar{\psi}_{tot}(\delta) &\triangleq \max_{\gamma_1 
			+ \gamma_2 \geq \alpha - \delta} \psi_{com}(\delta,\gamma_1,\gamma_2) \\
			& \hspace{0.5in}+  \psi_{int}(\delta,	\gamma_1,\gamma_2) + \psi_{ext}(\gamma_1,\gamma_2).
		 \end{align*}
		now satisfies the role of the \textit{total exponent} in Theorem \ref{thm:firstfacetotalexponent}.\\

		We can repeat the above argument for any $r$ by dividing the indices denoted by $k+1 , \ldots , n$ into $r$ parts. Define the quantities $\gamma_i, \ i = 1, \ldots r$ as
		$\gamma_i \triangleq \frac{l_i}{n}$ where $l_i$ is the number of indices of face $G$ in the $i^{th}$ interval. Also define
		 $h_i \triangleq r \gamma_i$ and let $\mathbf{h} = (h_1, \ldots ,h_r)^t$. 
		The total exponent is now given by
	\begin{align*}
			\bar{\psi}_{tot} = \max_{\mathbf{h}} \quad & \bar{\psi}_{com}(\mathbf{h}) + \bar{\psi}_{int}(\mathbf{h}) + \bar{\psi}_{ext}(\mathbf{h}) \\
		\mbox{subject to} \quad &\frac{1}{r} \sum_{i=1}^{r} h_i \geq \alpha - \delta,
	\end{align*}
	 where
	 \begin{align*}
	 	\bar{\psi}_{int}(\mathbf{h}) = \max_y \mbox{ } \psi_{int}(\mathbf{h},y), \quad \mbox{as in (\ref{eq:internalcrude}) and}
	 \end{align*}
	 \begin{align*}
	 	\bar{\psi}_{ext}(\mathbf{h}) = \max_x \mbox{ } \psi_{ext}(\mathbf{h},x), \quad \mbox{as in (\ref{eq:externalcrude}) }.
	 \end{align*}
	 The dependence of the exponent functions on other variables has been suppressed for compactness.

	 So, the total exponent can be obtained by the 
	 following maximization:
	 \begin{align}
	 	\bar{\psi}_{tot} = \max_{\mathbf{h},x,y} \quad &  \psi_{com}(\mathbf{h}) + \psi_{int}(\mathbf{h},y) + \psi_{ext}(\mathbf{h},x)   \label{eq:totalimprovedexponent} \\
		\mbox{subject to} \quad & \frac{1}{r} \sum_{i=1}^{r} h_i \geq \alpha - \delta. \notag
	 \end{align}
	 
	 	 We now compute the expressions for each of the functions appearing in the above maximization.
	For the subsequent derivation, we define $f_i \triangleq f \left(   \delta + \frac{(i-1)(1-\delta)}{r}  \right)$. \\
	
         \begin{subsubsection}{\textbf{Combinatorial Exponent}} \label{sec:firstfacecombinatorialexponent}
		The combinatorial exponent is given by
		\begin{align*}
			\psi_{com} &= \frac{1}{r} \sum_{i=1}^{r} (1-\delta) H \left(   \frac{r \gamma_i}{1-\delta} \right) + \left( \sum_{i=1}^{r} \gamma_i - \delta \right) \log 2\\
			&= \frac{1-\delta}{r}  \sum_{i=1}^{r} H \left(   \frac{ h_i}{1-\delta} \right) + \left( \sum_{i=1}^{r} r h_i - \delta \right) \log 2.
		\end{align*}
	\end{subsubsection}
	
	For the internal and external angle exponents, in addition to obtaining their expressions, we will also bound them suitably by analytically simpler expressions so that the
	optimization problem described in (\ref{eq:totalimprovedexponent}) becomes more tractable.\\
	
	\begin{subsubsection}{\textbf{External Angle Exponent}}  \label{sec:firstfaceexternalangle}
	We start with 
	\begin{align*}
		\zeta(x) = c_2 x^{2} -G_0(x),
	\end{align*}
		where
		\begin{align*}
			c_2 &= \int_{0}^{\delta} f^2(u) du + \sum_{i=1}^{r} \int_{\delta + \frac{(i-1)  (1-\delta)}{r}}^{\delta + \frac{(i-1)  (1-\delta)}{r} + \frac{h_i}{r}} f^2(u) du \\
			G_0(x)& = \int_{\delta}^{1} \log(\erf(xf(u))) du \\
			&- \sum_{i=1}^{r} \int_{\delta + \frac{(i-1)  (1-\delta)}{r}}^{\delta + \frac{(i-1)  (1-\delta)}{r} + \frac{h_i}{r}} \log(\erf(xf(u))) du.
		\end{align*}
		As $f(.)$ is an increasing function, the integral appearing above in the expression of $c_2$ can be bound by its left Riemann sum.
		\begin{align*}
			c_2 \geq 	\int_{0}^{\delta} f^2(u) du + \frac{1}{r} \sum_{i=1}^{r}  f_{i}^{2} {h_{i}} \triangleq \bar{c_2}. 
		\end{align*}
		Similary,
		\begin{align*}
			G_0(x) &\leq \int_{\delta}^{1} \log(\erf(xf(u))) du \\
			&-  \frac{1}{r} \sum_{i=1}^{r} \log(\erf(x f_i)) \triangleq \bar{G_0}(x).
		\end{align*}
		So,
		\begin{align*}
			\zeta(x) \geq \bar{c_2}x^2 - \bar{G_0}(x) \triangleq \bar{\zeta(x)} .
		\end{align*}
		Combining we obtain a simplified expression for the external angle exponent as		
		\begin{align*}
			 \psi_{ext}(x) = -\bar{\zeta}(x) = -\left( \bar{c}_2 x^{2} - \bar{G}_0(x) \right).
		\end{align*}	
		The optimized external angle exponent is then given by 
		\begin{align*}
			\bar{\psi}_{ext} = \max_{x} \psi_{ext}(x) .
		\end{align*}	
		\\
		
	\end{subsubsection}
	
	\begin{subsubsection}{\textbf{Internal Angle Exponent}}  \label{sec:firstfaceinternalangle}
	We start with 
	\begin{align*}
		\eta(y) = \frac{c_0^2}{2c_1}y^2 + c_0 \lambda_0^{*}(y)
	\end{align*}
	where
		\begin{align*}
			c_0 &=  \sum_{i=1}^{r} \int_{\delta + \frac{(i-1)  (1-\delta)}{r}}^{\delta + \frac{(i-1)  (1-\delta)}{r} + \frac{h_i}{r}} f(u) du \\
			&\geq  \frac{1}{r} \sum_{i=1}^{r}  f_{i} {h_{i}} \triangleq \bar{c_0}.
		\end{align*}
		$\lambda_0^{*}(y)$ is the convex conjugate of $\lambda_0(.)$ and is given by,
		\begin{align*}
			\lambda_0^{*}(y) &= \max_s  \mbox{  }  sy - \lambda_0(s).
		\end{align*}
		We are interested in only in the region $s \leq 0$. In this region we have,
		\begin{align*}
			\lambda_0(s) &=  \frac{1}{c_0} \sum_{i=1}^{r} \int_{\delta + \frac{(i-1)  (1-\delta)}{r}}^{\delta + \frac{(i-1)  (1-\delta)}{r} + \frac{h_i}{r}} \lambda(sf(u)) du \\
			&\leq  \frac{1}{r \bar{c}_0}
			 \sum_{i=1}^{r} h_i \lambda(sf_i) \triangleq \bar{\lambda}_0(s),
		\end{align*}
		which follows from the fact that $s \leq 0$ and $\lambda(u)$ is an increasing function of $u$.
		This gives
		\begin{align*}
			\max_s \mbox{  } sy - \lambda_0(s) \geq \max_s \mbox{  } sy - \bar{\lambda}_0(s),
		\end{align*}
		and hence
		\begin{align*}
			\lambda_0^{*}(y) \geq \bar{\lambda}_0^{*}(y) = \max_s \mbox{  } sy - \bar{\lambda}_0(s).
		\end{align*}
		So, we conclude
		\begin{align*}
			\eta(y) \geq  \frac{\bar{c}_0^2}{2c_1}y^2 + \bar{c}_0 \bar{\lambda}_0^{*}(y) \triangleq \bar{\eta}(y).
		\end{align*}
		From the above we obtain a simplified expression for the internal angle exponent as 
		\begin{align*}
			\psi_{int}(y) &=  -(r \sum_{i=1}^{r}h_i - \delta) \log 2 -\bar{\eta}(y) \\
			&= -(r \sum_{i=1}^{r}h_i  - \delta) \log 2 - \left(  \frac{\bar{c}_0^2}{2c_1}y^2 + \bar{c}_0 \bar{\lambda}_0^{*}(y)  \right).
		\end{align*}
		The optimized internal angle exponent is then obtained as 
		\begin{align*}
			\bar{\psi}_{int} = \max_{y} \psi_{int}(y).
		\end{align*}
		\end{subsubsection}
	
	To show that the maximization with respect to $\mathbf{h}$ in (\ref{eq:totalimprovedexponent}) can be performed efficiently, we will show that $\psi_{tot}(\mathbf{h},x,y)$ is
	 a concave function of $\mathbf{h}$ for fixed value of $x$ and $y$. 
	 This follows from the following lemma. The proof can be found in the appendix.
	\begin{lem} \label{lem:internalexponentcurvature}
		The combinatorial, internal and external angle exponents are concave functions of $\mathbf{h}$ for fixed values of $x$ and $y$.
	\end{lem}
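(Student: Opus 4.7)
The plan is to handle the three exponents separately, showing concavity of each in $\mathbf{h}$ for fixed $x,y$; since a sum of concave functions is concave, this yields the lemma.

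First, for the combinatorial exponent $\psi_{com}(\mathbf{h}) = \frac{1-\delta}{r} \sum_{i=1}^r H\!\left(\frac{h_i}{1-\delta}\right) + (\tfrac{1}{r}\sum_i h_i - \delta) \log 2$, each summand $H(h_i/(1-\delta))$ is a scaling of the binary entropy, which is well known to be concave on $[0,1]$; the remaining term is linear in $\mathbf{h}$. Concavity of $\psi_{com}$ is therefore immediate.

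Next, for the external angle exponent, I would first observe that $\bar{c}_2 = \int_0^\delta f^2(u)\,du + \tfrac{1}{r}\sum_i f_i^2 h_i$ is affine in $\mathbf{h}$, and likewise the corrected bound $\bar{G}_0(x) = \int_\delta^1 \log(\erf(xf(u)))\,du - \tfrac{1}{r}\sum_i h_i \log(\erf(x f_i))$ is affine in $\mathbf{h}$ for fixed $x$. Hence $\psi_{ext}(\mathbf{h},x) = -\bar{c}_2 x^2 + \bar{G}_0(x)$ is affine in $\mathbf{h}$, in particular concave.

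The internal angle exponent is the one real obstacle, since it contains $\bar{c}_0 \bar{\lambda}_0^*(y)$, a product of a linear functional of $\mathbf{h}$ with a convex conjugate whose very definition depends on $\mathbf{h}$. The key algebraic step I would make is to absorb the prefactor into the supremum: using $\bar{c}_0 \bar{\lambda}_0(s) = \tfrac{1}{r} \sum_i h_i \lambda(s f_i)$, we can write
\begin{equation*}
\bar{c}_0\,\bar{\lambda}_0^*(y) \;=\; \bar{c}_0 \sup_{s} \left( sy - \bar{\lambda}_0(s)\right) \;=\; \sup_{s \le 0} \left( \bar{c}_0 s y - \tfrac{1}{r} \sum_{i=1}^r h_i \lambda(s f_i)\right),
\end{equation*}
which is a pointwise supremum of functions that are affine in $\mathbf{h}$. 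Hence $\bar{c}_0 \bar{\lambda}_0^*(y)$ is convex in $\mathbf{h}$, and its negative is concave. The remaining term $\tfrac{\bar{c}_0^2}{2 c_1} y^2$ is the square of a linear functional of $\mathbf{h}$ (since $c_1$ does not depend on $\mathbf{h}$), hence convex, and enters $\psi_{int}$ with a negative sign. The entropy-free $-(\tfrac{1}{r}\sum_i h_i - \delta)\log 2$ contribution is affine. Summing, $\psi_{int}(\mathbf{h},y)$ is concave in $\mathbf{h}$ for fixed $y$.

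Combining the three concavity conclusions gives concavity of $\psi_{tot}(\mathbf{h},x,y)$ in $\mathbf{h}$ for fixed $x,y$, proving the lemma. The only delicate point is the Legendre-transform manipulation above; the rest is bookkeeping to verify that after substituting the explicit expressions for $\bar{c}_0, \bar{c}_2, \bar{G}_0, \bar{\lambda}_0$, every dependence on $\mathbf{h}$ is either affine or a convex transformation thereof.
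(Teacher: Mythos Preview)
Your proof is correct and follows essentially the same approach as the paper: the combinatorial exponent is handled via concavity of $H$, the external exponent is observed to be affine (linear) in $\mathbf{h}$, and for the internal exponent the key step in both arguments is to absorb the factor $\bar c_0$ into the supremum so that $\bar c_0\,\bar\lambda_0^*(y)=\sup_s\bigl(\bar c_0 sy-\tfrac{1}{r}\sum_i h_i\lambda(sf_i)\bigr)$ is a pointwise supremum of affine functions of $\mathbf{h}$, hence convex. The remaining quadratic term $\bar c_0^2 y^2/(2c_1)$ is convex for the same reason in both proofs.
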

	
	The parameter $r$ which governs the number of evaluation points we use to compute the angle exponents gives us a method to control the accuracy of the computed exponents. If we can obtain a value
	 $\bar{\delta}$ such that for all $\delta \leq \bar{\delta}$ we have $\bar{\Psi}_{tot}(\delta) < 0$, then Theorem \ref{thm:firstfacetotalexponent} guarantees that whenever $\delta \leq \bar{\delta}$, weighted 
	 $\ell_1$-minimization recovers the corresponding sparse signal with overwhelming probability. We call this $\bar{\delta}$ the guaranteed bound on recoverable sparsity levels $\delta$.
	 Increasing $r$ results in a tighter guaranteed bound but
	at the expense of an increased cost in performing the optimization in (\ref{eq:totalimprovedexponent}).
	In Figure \ref{fig:thresholdvsm} we
	 show via simulation how this bound improves with the parameter $r$. For this, we fix a compression ratio $\alpha = \frac{m}{n} = 0.5$. The weights $w_i$ are chosen as $w_i = 1 + \frac{i}{n}$ which corresponds to
	 the weight function $f(u) = 1 + u$. 
	Note how the bound improves with increasing value of $r$ but saturates quickly. This indicates that a moderately large value of $r$ (e.g. $r=30$ from the figure) will suffice for accurate enough angle exponent
	computations. 
		
	\begin{figure}
		\begin{center}
		\includegraphics[scale = 0.45]{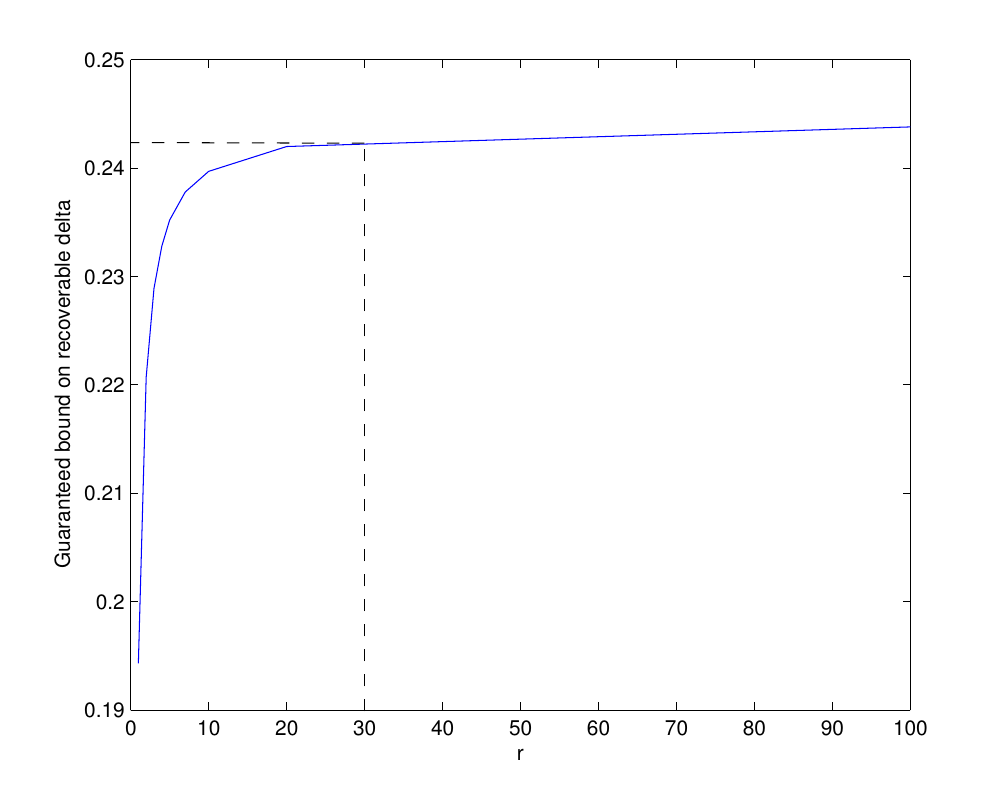}
		\end{center}
		\caption{Guaranteed bound on recoverable $\delta$ vs $r$ for compression ratio $\alpha = 0.5$ and weight function $f(u) = 1 + u$
		 computed using the methods of Section \ref{sec:analysisofl1}. As $r$ increases the computed bound also increases indicating an improvement in the tightness of the bound.
		  Also since the improvement saturates fairly fast, we can use 
		a moderately large value of $r$ to obtain accurate enough angle exponents. The value $r=30$ in the figure represents such a choice.}
		\label{fig:thresholdvsm}
	\end{figure}

\end{subsection}

\end{section}

\begin{section}{Proof of Theorem \ref{thm:mainresult}} \label{sec:withprob}
		In this section, we establish the main result of this paper which is Theorem \ref{thm:mainresult}. We do this in two steps. We first characterize the properties of a typical signal drawn from our model and provide a 
		large deviation result on the probability that a signal drawn from our model does not satisfy this property.  For signals which have this typical property, we
		 then use the analysis method of Section \ref{sec:analysisofl1} to find the exponent
		$\psi_{tot}(p,f)$ in the statement of Theorem \ref{eq:weightedl1}.		
		We start by writing
		\begin{align*}
			\mathbf{P}(E) = \sum_{F \in \mathcal{F}} \mathbf{P}({\mathbf{x} \in {F}}) \mathbf{P}(E | \mathbf{x} \in F).
		\end{align*}
		To analyze this expression we will further split the sum into faces belonging to different ``classes" which we describe below.
		Divide the set of indices from $1, \ldots, n$ into $r$ equal parts with $I_i$ denoting the $i^{th}$ interval of indices.
		Any face $F$ of the skewed cross polytope $\mathcal{P}$ can be fully specified by the index of its vertices (up to the signs 
		of the vertices).  For a given face $F$, let the set of indices representing the face $F$ be denoted by $I(F)$.
		For a given $\mathbf{k} = (k_1, \ldots, k_r)$ denote by $\mathcal{F}_r(\mathbf{k})$, the set of all faces of $\mathcal{P}$ with $|I(F) \cap I_i| = k_i$. Also let $g_i = k_i \frac{r}{n}$.
		Recall that we denote by $E$ the failure event, i.e. the event that the weighted $\ell_1$-minimization does not produce the correct solution. Then we have
		\begin{align}
			\mathbf{P}(E) = \sum_{\mathbf{k}} \sum_{F \in \mathcal{F}_r(\mathbf{k}) } \mathbf{P}(\mathbf{x} \in F) \mathbf{P}(E | \mathbf{x} \in F). \label{eq:mainerror}
		\end{align} 
		We can just consider one representative among the $2^{\sum_i k_i}$ faces created by the different sign patterns. This is because, by the symmetry of the problem,
		 all the faces have the same probabilities $\mathbf{P}(\mathbf{x} \in F)$ and $\mathbf{P}(E|\mathbf{x} \in F)$. For simplicity we always choose this representative as the face that lies in the first orthant.
		\begin{align*}
			\mathbf{P}(\mathbf{x} \in F) = & \left( \prod_{i=1}^{r} \prod_{j \in I(F) \cap I_i }  \frac {p(j/n)}{1 - p(j/n)  }  \right ) \\
			&  \times \prod_{i=1}^{n} (1 - p(j/n)).
		\end{align*} 
		The function $\frac{x}{1-x}$ is an increasing function of $x$ for $x \in (0,1)$. So,
		\begin{align*}
			\mathbf{P}(\mathbf{x} \in F) \leq  \left(   \prod_{i=1}^{r} {\left( \frac{p\left(  \frac{i-1}{r} \right)}{1-p\left(  \frac{i-1}{r} \right)} \right)}^{k_i}     \right )  \prod_{i=1}^{n} (1 - p(j/n)).
		\end{align*} 
		Denote the right hand side of the above inequality be $\mathbf{P}(\mathbf{k})$, which means for $\mathbf{x} \in \mathcal{F}_r(\mathbf{k})$ we have $\mathbf{P}(\mathbf{x} \in F) \leq \mathbf{P}(\mathbf{k})$.
		Define
		\begin{align*}
			\mathbf{P}(\mathbf{x} \in \mathcal{F}_r(\mathbf{k})) \triangleq \mathbf{P}(\mathbf{x} \in F \, \mbox{for some } F \in \mathcal{F}_r(\mathbf{k}) ).
		\end{align*}
		Then
		\begin{align*}
			\mathbf{P}(\mathbf{x} \in \mathcal{F}_r(\mathbf{k})) &\leq \sum_{F \in  \mathcal{F}_r(\mathbf{k}) } \mathbf{P}(\mathbf{x} \in F) \\
			& \leq 	\left(  \prod_{i=1}^{r}  { {\frac{n}{r}} \choose {\frac{g_i}{r}n}} \right) P(\mathbf{k}).
		\end{align*}
		By Lemma \ref{lem:basic_exponent_monotonicity}, among all faces $F \in \mathcal{F}_r(\mathbf{k}) $, the one
		 which maximizes $\mathbf{P}(E|\mathbf{x} \in F)$ is the one obtained by stacking all the indices to the right of each interval.
		We denote this maximum probability by $P(E|\mathbf{k})$.
		Combining the above we get
		\begin{align}
			\mathbf{P}(E) \leq \sum_{\mathbf{k}}   \mathbf{P}(\mathbf{x} \in \mathcal{F}_r(\mathbf{k})) \mathbf{P}(E|\mathbf{k}). \label{eq:errorsum}
		\end{align}
		As the function $p(u)$ is monotonically decreasing, $ \mathbf{P}(\mathbf{x} \in \mathcal{F}_r(\mathbf{k}))$ is not same for all $\mathbf{k}$.
		We now proceed to show that $ \mathbf{P}(\mathbf{x} \in \mathcal{F}_r(\mathbf{k}))$ is significant only when $\mathbf{x}$ takes values in a ``typical" set. For values of $\mathbf{k}$ outside the typical set
		$ \mathbf{P}(\mathbf{x} \in \mathcal{F}_r(\mathbf{k}))$ is exponentially small in $n$ and can be ignored in the sum in (\ref{eq:errorsum}). For $\mathbf{k}$ in the typical set, we will use the methods of 
		Section \ref{sec:analysisofl1} to bound $\mathbf{P}(E|\mathbf{k})$ and hence $\mathbf{P}(E)$.	
		
		The following Lemma provides bounds for $\mathbf{P}(\mathbf{x} \in \mathcal{F}_r(\mathbf{k}))$ which will motivate the definition of typicality that follows.	
		\begin{lem} \label{typicality1}
			Let  $D( q || p)$ denote the Kullback-Leibler distance between two Bernoulli random variables with probability of success given by $q$ and $p$ respectively. Define 
			$\bar{p}_i = r \int_{\frac{i}{r}}^{\frac{i+1}{r}} p(u) du$. If we have $\frac{1}{r} \sum_{i=1}^{r} D(g_i || \bar{p}_i) > \epsilon$, then there exists $n_0$, such that for all $n > n_0$,
			\begin{align*}
				\mathbf{P}(\mathbf{x} \in \mathcal{F}_r(\mathbf{k}))< e^{-b(\epsilon)n},
			\end{align*}
			where $b(\epsilon)$ is a positive constant.
		\end{lem}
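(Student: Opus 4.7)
My plan is to reduce $\mathbf{P}(\mathbf{x} \in \mathcal{F}_r(\mathbf{k}))$ to a product of one-dimensional large deviation probabilities by exploiting the independence of the support indicators across the $r$ intervals, and then invoke a Chernoff bound for sums of independent non-identical Bernoulli random variables. Let $N_i = |\mathrm{Supp}(\mathbf{x}) \cap I_i|$. By the signal model, $N_i = \sum_{j \in I_i} X_j$ with $X_j \sim \mathrm{Bernoulli}(p(j/n))$ mutually independent, and the $N_i$ are themselves independent across $i$. The event $\{\mathbf{x} \in \mathcal{F}_r(\mathbf{k})\}$ is precisely $\bigcap_i \{N_i = k_i\}$, so
\begin{align*}
\mathbf{P}(\mathbf{x} \in \mathcal{F}_r(\mathbf{k})) \;=\; \prod_{i=1}^r \mathbf{P}(N_i = k_i).
\end{align*}

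The second step is the single-interval bound $\mathbf{P}(N_i = k_i) \leq \exp(-|I_i|\, D(g_i \| \tilde{p}_i))$, where $\tilde{p}_i = |I_i|^{-1}\sum_{j \in I_i} p(j/n)$ is the empirical mean of the Bernoulli parameters over $I_i$. The key estimate is that the moment generating function of $N_i$ is dominated by that of a $\mathrm{Binomial}(|I_i|,\tilde{p}_i)$: since $p \mapsto \log(1 - p + p e^{\lambda})$ is concave in $p$ for every $\lambda \in \mathbb{R}$, Jensen's inequality yields
\begin{align*}
\mathbf{E}[e^{\lambda N_i}] \;=\; \prod_{j \in I_i} \bigl(1 - p(j/n) + p(j/n) e^{\lambda}\bigr) \;\leq\; (1 - \tilde{p}_i + \tilde{p}_i e^{\lambda})^{|I_i|}.
\end{align*}
Optimizing $\lambda$ in Markov's inequality, and treating the tails $\{N_i \geq k_i\}$ and $\{N_i \leq k_i\}$ according to the sign of $g_i - \tilde{p}_i$, produces the stated KL rate.

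The third step replaces the empirical mean $\tilde{p}_i$ by $\bar{p}_i$. Continuity of $p(\cdot)$ guarantees that $\tilde{p}_i - \bar{p}_i = o(1)$ uniformly in $i$ as $n \to \infty$, and continuity of $D(g \| \cdot)$ on compact subsets of $(0,1)$ then gives $D(g_i \| \tilde{p}_i) \geq D(g_i \| \bar{p}_i) - o(1)$. Using $|I_i| = n/r$ and multiplying over $i$,
\begin{align*}
\mathbf{P}(\mathbf{x} \in \mathcal{F}_r(\mathbf{k})) \;\leq\; \exp\!\left(-\frac{n}{r}\sum_{i=1}^r D(g_i \| \bar{p}_i) + o(n)\right),
\end{align*}
and the hypothesis $\frac{1}{r}\sum_i D(g_i \| \bar{p}_i) > \epsilon$ supplies the claimed bound with any $b(\epsilon) < \epsilon$, for $n$ large enough. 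The main obstacle to watch is that $D(g \| p)$ blows up as $p \to 0$ or $p \to 1$, so the uniform continuity argument requires $\bar{p}_i$ bounded away from the endpoints; on intervals where $p$ is $0$ or $1$ the random variable $N_i$ is deterministic and those factors can be handled directly, so the argument is not genuinely obstructed.
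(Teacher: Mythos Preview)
Your argument is correct and arrives at the same exponential rate as the paper, but the route is genuinely different. The paper does not use the exact factorization $\mathbf{P}(\mathbf{x}\in\mathcal{F}_r(\mathbf{k}))=\prod_i\mathbf{P}(N_i=k_i)$; instead it first bounds $\mathbf{P}(\mathbf{x}\in F)$ uniformly over $F\in\mathcal{F}_r(\mathbf{k})$ by replacing each $p(j/n)$ with the left-endpoint value $p((i-1)/r)$, multiplies by the face count $\prod_i\binom{n/r}{k_i}$, and obtains $-\tfrac{1}{r}\sum_i D(g_i\|p_i)+\Delta(r)$ with $p_i=p((i-1)/r)$ and a Riemann error $\Delta(r)$. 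To upgrade $p_i$ to the interval average $\bar p_i$ (and kill $\Delta(r)$) it then subdivides each interval into $t$ pieces, repeats the calculation, applies the log-sum inequality to collapse the sub-intervals, and lets $t\to\infty$. Your Jensen step on the log-MGF $p\mapsto\log(1-p+pe^{\lambda})$ accomplishes in one stroke what the paper's subdivision-plus-log-sum maneuver does; indeed the log-sum inequality is just another face of the same concavity. The payoff of your approach is a shorter, more transparent proof that avoids the auxiliary parameter $t$; the paper's approach has the minor advantage of reusing the combinatorial bound $\binom{n/r}{k_i}P(\mathbf{k})$ already set up earlier in the section, so it stays closer to the surrounding narrative.
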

		\begin{proof}
		We have
		\begin{align*}
			\mathbf{P}(\mathbf{x} \in \mathcal{F}_r(\mathbf{k})) = \left(  \prod_{i=1}^{r}  { {\frac{n}{r}} \choose {\frac{g_i}{r}n}} \right) P(\mathbf{k}).
		\end{align*} 	
		Then,
		\begin{align*}
			&\frac{1}{n} \log \left(\mathbf{P}(\mathbf{x} \in \mathcal{F}_r(\mathbf{k})) \right) = \frac{1}{n} \sum_{i=1}^{r} {\frac{n}{r} \choose \frac{g_i n}{r}  } + \frac{1}{n} \log (P(\mathbf{k})) \\
			=&  \frac{1}{n} \sum_{i=1}^{r} {\frac{n}{r} \choose \frac{g_i n}{r}  }  + \frac{1}{n} \sum_{i=1}^{r} \frac{g_i n}{r} \log\left(   \frac{p_i}{1-p_i}  \right) \\
			+& \frac{1}{n} \sum_{j=1}^{n} \log \left( 1 - p\left( \frac{j}{n}\right)\right).
		\end{align*} 
		Letting $n \rightarrow \infty$,
		\begin{align*}
			\lim_{n \rightarrow \infty}& \frac{1}{n} \log(\mathbf{P}(\mathbf{x} \in \mathcal{F}_r(\mathbf{k}))) \\
			&=  \frac{1}{r} \sum_{i=1}^{r}H(g_i) + \frac{1}{r} \sum_{i=1}^{r}g_i \log \left(   \frac{p_i}{1-p_i}  \right) \\
			& + \int_{0}^{1} \log(1 - p(u))du \\
			&=\frac{1}{r} \sum_{i=1}^{r} H(g_i) + \left( g_i \log(p_i) + (1-g_i) \log(1- p_i)\right)\\
			& -\log(1 - p\left( \frac{i-1}{r}\right))  + \int_{0}^{1} \log(1 - p(u))du \\
			&= \frac{1}{r} \sum_{i=1}^{r} -D(q_i || p_i) + \Delta(r),
		\end{align*}
		where $\Delta(r)$ is the error in the Riemann sum given by  $\Delta(r) = \left[ \int_{0}^{1} \log(1 - p(u))du - \frac{1}{r} \sum_{i=1}^{r} \log(1 - p\left( \frac{i-1}{r} \right))  \right]$. 
		Now further divide each of the $r$ intervals into $t$ equal parts thus forming a total of $tr$ intervals. Let $g_{t,i} \frac{n}{r}$ denote the number of indices of face $F$ in the 
		$(tr)^{th}$ interval counted according to the new partition. Summing up the number of indices in all the intervals of length $\frac{1}{tr}$ contained in an interval of length
		 $\frac{1}{r}$ gives $g_i = \frac{1}{k} \sum_{j=ti+1}^{t(i+1)} g_{t,j}$. Also let $p_{r,j} = p \left( \frac{i}{r} \right)$.
		 The calculations just carried out above can be repeated for this
		new partition to give 
		\begin{align*}
			\lim_{n \rightarrow \infty}& \frac{1}{n} \log (\mathbf{P}(\mathbf{x} \in \mathcal{F}_r(\mathbf{k})))\\
			& \leq  -\frac{1}{tr} \sum_{i=1}^{r} \sum_{j=1}^{t}  D(g_{t,ti + j} || p_{t,ti+j}) + \Delta(tr).
		\end{align*}
		We now bound the term $ \sum_{j=1}^{t}  D(g_{t,ti + j} || p_{t,ti+j})$.
			\begin{align*}
				\sum_{j=1}^{t} &  D(g_{t,ti + j} || p_{t,ti+j}) \\
				&= \sum_{j=1}^{t} g_{t,ti + j} \log \left( \frac{g_{t,ti + j}}{p_{t,ti+j}}\right) \\
				&+   \sum_{i=1}^{r} (1-g_{t,ti + j}) \log \left(\frac{1-g_{t,ti + j}}{1-p_{t,ti+j}} \right).
			\end{align*}
			Using the log-sum inequality, we obtain,
			\begin{align*}
				\frac{1}{t} &\sum_{j=1}^{t} D(g_{t,ti + j} || p_{t,ti+j}) \\
				&\geq  \left(  \frac{\sum_{j=1}^{t}g_{t,ti + j}}{t}  \right) \log \left(  \frac{\sum_{j=1}^{t} g_{t,ti + j}}{\sum_{j=1}^{t} p_{t,ti+j}} \right) \\
				&+ \left(  \frac{\sum_{j=1}^{t}1- g_{t,ti + j}}{t}  \right) \log \left(  \frac{\sum_{j=1}^{t}1- g_{t,ti + j}}{\sum_{i=1}^{r}1- p_{t,ti+j}} \right) \\
				&= g_i \log \left(  \frac{g_i}{\bar{p}_{i,t} } \right) + (1-g_i) \log \left(  \frac{1-g_i}{1- \bar{p}_{i,t}  } \right) \\
				&= D ( g_i || \bar{p}_{i,t} ),
			\end{align*}
			where, $\bar{p}_{i,t} = \frac{1}{t} \sum_{j=1}^{t} p_{t,ti+j} $.
			Using this,
			\begin{align*}
				\lim_{n \rightarrow \infty} & \frac{1}{n} \log (\mathbf{P}(\mathbf{x} \in \mathcal{F}_r(\mathbf{k}))) \\
				&\leq  -\frac{1}{r} \sum_{i=1}^{r}D ( g_i ||\bar{p}_{i,t} ) + \Delta(tr).
			\end{align*}
			Since this is true for every $t$, we let $t \rightarrow \infty$ to get
			\begin{align*}
				\lim_{n \rightarrow \infty} \frac{1}{n} \log (\mathbf{P}(\mathbf{x} \in \mathcal{F}_r(\mathbf{k}))) \leq -\frac{1}{r} \sum_{i=1}^{r}D ( g_i ||\bar{p}_i ).
			\end{align*}
			So, if the condition of the lemma is satisfied, then
			\begin{align*}
				\lim_{n \rightarrow \infty} \frac{1}{n} \log (\mathbf{P}(\mathbf{x} \in \mathcal{F}_r(\mathbf{k}))) \leq - \frac{\epsilon}{r}.
			\end{align*}
			and the claim in the lemma then follows.
	\end{proof}

	Lemma \ref{typicality1} motivates the following natural definition of $\epsilon$-$typicality$.
	\begin{defin}
		Given $\epsilon > 0$, define $\mathbf{k} = (k_1, \ldots, k_r)$ to be $\epsilon$-typical if $\frac{1}{r} \sum_{i=1}^{r}D ( g_i ||\bar{p}_i ) \leq \epsilon$.
	\end{defin}

		Using this definition of $\epsilon$-$typicality$, we can now bound the probability of failure $\mathbf{P}(E)$ by using (\ref{eq:errorsum}) as
		\begin{align*}
			\mathbf{P}(E) \leq& \underbrace{\sum_{\mathbf{k} \, is \, \epsilon \,typical} \mathbf{P}(\mathbf{x} \in \mathcal{F}_r(\mathbf{k})) \mathbf{P}(E|\mathbf{k})}_{I} \\
			&+ 
			\underbrace{\sum_{\mathbf{k} \, not \, \epsilon\,typical} \mathbf{P}(\mathbf{x} \in \mathcal{F}_r(\mathbf{k})) \mathbf{P}(E|\mathbf{k})}_{II}. 
		\end{align*}
		For a fixed value of $r$, the number of possible values of $k_1, k_2, \ldots, k_r$ is bounded by $n(n+r-1)^{r-1}$. Since $\mathbf{P}(E|\mathbf{k}) \leq 1$, the second term $II$ can be bounded as
		\begin{align*}
			II \leq n(n+r-1)^{r-1} e^{-c(\epsilon)n},
		\end{align*}
		for some positive constant $c(\epsilon)$.
		As $\lim_{n \rightarrow \infty} \frac{1}{n}  \log \left(n(n+r-1)^{r-1} \right)  = 0$, there exists $n_0$, such that for all $n > n_0$, 
		\begin{align*}
			II \leq e^{-c_0(\epsilon)n},
		\end{align*}
		for some positive constant $c_0(\epsilon)$. This allows us to only consider the first term $I$ for $\mathbf{k}$ which are $\epsilon$-typical. 
		Among all $\mathbf{k}$ $\epsilon$-$typical$, let $F_{\epsilon,p}^*$ be the face which maximizes the probability $\mathbf{P}(E | \mathbf{x} \in F)$.
		Then,
		\begin{align*}
			I \leq \mathbf{P}(E | \mathbf{x} \in F_{\epsilon,p}^*).
		\end{align*}
		This gives
		\begin{align*}
			\lim_{n \rightarrow \infty} \frac{1}{n} \log(I) \leq \lim_{n \rightarrow \infty} \frac{1}{n} \log(\mathbf{P}(E | \mathbf{x} 
			\in F_{\epsilon,p}^*)).
		\end{align*}
		Also, since we can choose $\epsilon$ to be as small as we want,
		we can essentially just consider the case with $\epsilon = 0$. This gives us $\mathbf{k} = (k_1, \ldots ,k_r)$ with $k_i = \bar{p}_i$. This defines a fixed face $F_p$ which we call the typical face of $\mathcal{P}$ and
		 we only need to bound the term $\mathbf{P}(E|\mathbf{x} \in F_p)$ which can be done by a simple generalization of the methods in Section \ref{sec:analysisofl1}.
		 The corresponding expressions for the optimized internal and external angle exponents can be found in the Appendix.
\end{section}

\begin{section}{Choosing the weight function} \label{sec:choosing}
	 In this section, we provide a method for choosing the weight function $f(.)$ used in the weighted $\ell_1$-minimization based on the probability function $p(.)$. 	
	For a given function $p(.)$, the aim is to find a function $f(.)$ which provides the best recovery guarantees defined in (\ref{deltabaroriginal}).
	
	 A natural way to choose the best function $f(.)$ would be to minimize the total exponent $\bar{\psi_{tot}(p(.), f(.))}$ with respect to $f(.)$. The recovery thresholds obtained from the angle exponent based analysis used in Section \ref{sec:analysisofl1} is known to be quite accurate in practice.  For our specific case, we provide further evidence of this fact in Section \ref{sec:simulation} via simulations. However, one disadvantage
	of this method is that the dependence of the recovery threshold on the functions $f(.)$ and $p(.)$ is not very intuitive, making it difficult to find the optimal $f(.)$ for a given $p(.)$. In this section, we take an alternative approach based on estimating the Gaussian width of the cone of feasible directions of the (weighted) $\ell_1$ polytope using Gordon's theorem  \cite{Gordon87} . This method was first introduced in \cite{VR08} in the context of sparse signal recovery. It was later used by \cite{Stojnic09}, where it was used to obtain, among other quantities of interest, the 
	weak recovery threshold for the standard $\ell_1$-minimization. This threshold matched the one obtained in \cite{Donoho1} indicating that it is tight. However, to achieve this tightness, the computations used are
	quite involved and it is not clear how it may be extended to analyze weighted $\ell_1$-minimization, especially in a fairly general setting like ours. Instead, we use the analysis in \cite {Venkat12}, which sacrifices
	some tightness but leads to very amenable expressions showing the dependence on $p(.)$ and $f(.)$.
	
	Before we begin, we first introduce some notation and state a few useful results from \cite{Venkat12} that we use in our analysis. Let $\mathbf{x}$ be the underlying sparse vector normalized as before such that 
	$\| \mathbf{x}\|_{\mathbf{w},1} = 1$. Let the cone $\mathcal{C}$ be defined by $\mathcal{C}  = \{ \mathbf{v} : \| \mathbf{x} +  \mathbf{v}\|_{\mathbf{w},1} \leq  1 \}$. Let $\mathcal{S} = \mathcal{C} \cap \mathbbm{S}^{n-1}$, where
	$\mathbbm{S}^{n-1}$ is the sphere in $n$ dimensions. The Gaussian width $\omega(\mathcal{S})$ of $\mathcal{S}$ is defined as
	\begin{align*}
		\omega(\mathcal{S}) = \mathbf{E}_\mathbf{g} \left[  \sup_{\mathbf{z} \in \mathcal{S}} \mathbf{g}^T \mathbf{z} \right],
	\end{align*}
	where the expectation is taken over $\mathbf{g} \sim \mathcal{N}(0,I)$.
	The following corollary of Gordon's theorem is stated in \cite{Venkat12}.
	\begin{cor}\cite{Venkat12}	
		If $m  > \omega^2(\mathcal{S}) + 1$, then weighted $\ell_1$-minimization recovers the correct underlying sparse vector $\mathbf{x}$ with probability at least 
		$1 - \exp \left[ \frac{1}{2} (\lambda_m - \omega(\mathcal{S} ))^2\right]$. Here $\lambda_m = \sqrt{2} \frac{\Gamma \left(\frac{m+1}{2} \right)}{\Gamma \left(\frac{m}{2} \right)}$ is the expected length of a $m$-dimensional Gaussian vector.
	\end{cor}
	Since, we want to find weights which minimize the exponent in the probability of failure, in turn we should try to minimize the Gaussian width $\omega(\mathcal{S})$. 
	
	The following lemma proved in \cite{Venkat12} will help us simplify the computations needed to estimate the Gaussian width $\omega(\mathcal{S}))$.
	\begin{lem} \cite{Venkat12}
		Let $\mathcal{C}^*$ denote the polar of the cone $\mathcal{C}$. Then 
		 $\omega^2(\mathcal{S}) \leq \mathbf{E}^2_{\mathbf{g}} \left[ \mbox{dist}(\mathbf{g}, \mathcal{C}^*)\right] \leq \mathbf{E}_{\mathbf{g}} \left[ \mbox{dist}^2(\mathbf{g}, \mathcal{C}^*)\right]$.
	\end{lem}
	The first inequality in the above lemma is based on a duality argument and it is remarked in \cite{Stojnic09} that it is tight, i.e., strong duality holds. The second inequality is a direct application of Jensen's inequality. 	
	From here, we proceed to compute an upper bound on the Gaussian width along the lines of \cite{Venkat12}. The cone $\mathcal{C}^*$ is given by 
	\begin{align*}
		\mathcal{C}^* =\{ &\mathbf{z} \in \mathbbm{R}^n :  z_i = t w_i, \mbox{ for } i \in F \mbox{ and }\\
		 &|z_i| \leq t w_i, \mbox{ for } i \notin F \mbox{ for some } t > 0\}.
	\end{align*}		
	So,
	\begin{align*}
		\inf_{u \in \mathcal{C}^*}& \|\mathbf{g} - \mathbf{u} \|^2 \\
		 &= \inf_{t > 0} \sum_{ i \in F} \|g_i - tw_i \|^2 + \sum_{ i \notin F} \inf_{|u_i| \leq tw_i }\| g_i - u_i \|^2 \\
			&= \inf_{t > 0} \sum_{ i \in F} \|g_i - tw_i \|^2 + \sum_{ i \notin F} \mbox{shrink}(g_i, tw_i)^2,
	\end{align*}
	where $\mbox{shrink}(g,t)$ is the $\ell_1$ shrinkage function given by
	\begin{align*}
		\mbox{shrink}(g,t) = \begin{cases} g + t, & g < -t \\
		0, & -t \leq g \leq t \\
		g - t, & g > t.
		\end{cases}
	\end{align*}
	Hence for all $t > 0$,
	\begin{align}
		\mathbf{E}_{\mathbf{g}} & \left[ \inf_{u \in \mathcal{C}^*} \|\mathbf{g} - \mathbf{u} \|^2  \right] \\
		&\leq \sum_{ i \in F} (1 + t^2 w_i^2) + \sum_{ i \notin F} \mathbf{E} \left[  \mbox{shrink}(g_i, tw_i)^2\right]  \notag \\
		& = \sum_{ i \in F} (1 + t^2 w_i^2) + \sum_{ i \notin F} \frac{2}{\sqrt{2 \pi}} \times \\
		& \left(  -tw_i e^{-t^2w_i^2/2}  + (1 + t^2w_i^2) \int_{t w_i}^{\infty} e^{-g^2/2} dg\right) \label{approxGW}.
	\end{align}

	Recall from Section \ref{sec:withprob} that, if we divide the indices $1, \ldots, n$ into $r$ equal parts $I_1, \ldots, I_r$, then we can assume that the support of the underlying sparse vector $\mathbf{x}$ drawn according to $p(.)$ 
	has $\frac{n}{r} \bar{p}_i$ elements in $I_i$, where $\bar{p}_i = r \int_{\frac{i}{r}}^{\frac{i+1}{r}}p(u) du$. Using this and the fact that $w_i = f(i/n)$ is monotonically increasing in $i$, we get from (\ref{approxGW}),
	\begin{align*}
		\mathbf{E}_{\mathbf{g}} &\left[ \inf_{u \in \mathcal{C}^*} \|\mathbf{g} - \mathbf{u} \|^2  \right]\\
		 &\leq \sum_{i=1}^{r} \frac{n}{r} \bar{p_i} (1 + t^2 f^2( {(i+1)}/{r})) \\
		&+ \sum_{i=1}^{r}\frac{n}{r} (1 - \bar{p_i})  \frac{2}{\sqrt{2 \pi}} \left({ -t f(i/r)} e^{- t^2 f^2(i/r)/2}\right) \\
		  &+  \frac{2}{\sqrt{2 \pi}} \left((1  + t^2 f^2(i/r))\int_{tf(i/r)}^{\infty} e^{-g^2/2} dg \right)\\
		& \triangleq n L_f.
	\end{align*}
	We seek to find $f(.)$ that minimizes $L_f$. Since we are interested in the asymptotics as $n \rightarrow \infty$, we can choose $r$ large to obtain  
	\begin{align*}
		L_f &= \int_{0}^{1} p(u) (1 +   t^2f^2(u))\\
		& + (1 - p(u)) \frac{2}{\sqrt{2 \pi}} \left({ - t f(u)} e^{- t^2 f^2(u)/2} \right) +  (1 - p(u))  \\
		&\times  \frac{2}{\sqrt{2 \pi}} \left( (1  + t^2 f^2(u)) \int_{tf(u)}^{\infty} e^{-g^2/2} dg \right) du.
	\end{align*}
	Note that replacing $f(.)$ by $t f(.)$ does not change the weighted $\ell_1$-minimization and hence we can drop $t$ in the above expression. The optimal choice of $f(.)$ is obtained by 
	\begin{align}
		f(.) =\arg \min_f L_f. \label{matching}
	\end{align}
	For a given choice of $p(.)$, $L_f$ is a convex function of $f(.)$ and the above minimization can be efficiently performed numerically.
	This defines a one to one map between the probability function $p(.)$ and the corresponding weight function $f(.)$.

	The method used in this section relies on minimizing an upper bound on the error exponent, which in turn leads to a justifiable prescription for choosing the weight function $f(.)$. Once such an $f(.)$ is obtained, we can then 
	use the angle exponent based analysis of Section \ref{sec:analysisofl1} to verify the performance of this choice of $f(.)$ for the given probability function $p(.)$. 
\end{section}


\begin{section}{Numerical computations and simulations} \label{sec:simulation}
	In this section, we compute the bounds using the techniques developed in this paper for a specific probability function $p(.)$. In particular we choose $p(.)$ to be a linear function whose
	slope is governed by a parameter $c$:
\begin{align}
	p(u) = \delta -c(u-0.5), \quad u \in [0,1]. \label{probfunction}
\end{align}
The expected level of sparsity of signals drawn from this model is $\int_{0}^{1} p(u) du = \delta$.
The value of $c$ governs how tilted the signal model is. Larger values of $c$ results in a higher tilt which means a random signal drawn from this model will have most of its non-zero entries concentrated at the beginning.
To recover signals drawn from this model we use weighted $\ell_1$-minimization with choice of weights defined by the 
weight function $f(u)$ chosen to be one of the following:
\begin{itemize}
	\item[(i)]  $f(u) = 1+ \rho u, \quad u \in [0,1].$ 
	 \item[(ii)]The recommended choice of $f(.)$ given in (\ref{matching}).
\end{itemize}
The reason for choosing the linearly parameterized family of functions in item $(i)$ is twofold. One, it is a simple natural choice of a parameterized family of weights that 
may be chosen to match the probability function $p(.)$ in (\ref{probfunction}), and this provides a basis for comparing the theoretical performance of this natural choice with the recommended choice (\ref{matching}). Second, it allows us to 
demonstrate how to use the methods described in this paper for computing the total exponent to choose the ``best" weight function from a given parameterized family for a specific probability function $p(.)$.
Recall from (\ref{deltabaroriginal}) the quantity
\begin{align}
	\bar{\delta} \quad \triangleq \quad \quad \max \quad & \delta \notag \\ 
		\mbox{subject to} \quad & \bar{\psi}_{tot}(p(.; \delta, c), f(.)) \leq 0, \label{def:bardelta}
\end{align}
which is called the guaranteed bound on recoverable sparsity $\delta$.

Before proceeding to evaluation of the performance of weighted $\ell_1$-minimization in the problem specified by the above choice of functions, we will first present theoretical bounds and simulations related to the behavior
of the so-called family of leading faces $\mathcal{F}_1$, when the corresponding 
cross-polytope is described by the function $f(.)$. This is contained in Section \ref{sec:F0}.
The reader interested in the performance of weighted $\ell_1$-minimization can skip over to Section \ref{sec:withprobsim}.

\begin{subsection}{Behavior of the leading family of faces $\mathcal{F}_1$ under random projection.} \label{sec:F0}
Recall that, for a given set of weights, the weighted $\ell_1$-ball is the cross polytope in $n$ dimensions whose vertices in the first orthant are given by $\frac{e_1}{w_1}, \ldots, \frac{e_k}{w_k}$ for some $k$. We call this 
face $F_0^k$. In Section \ref{subsec:firstfamilythreshold}, we developed sufficient conditions when weighted $\ell_1$-minimization recovers a signal $\mathbf{x}$ whose support is the first $k$ indices with overwhelming probability whenever $k = \delta n$. Similar to (\ref{def:bardelta}), we can define a corresponding $\bar{\delta}$ specific to the family of leading faces given by
\begin{align*}
	\bar{\delta} \quad \triangleq \quad \quad \max \quad & \delta \\ 
		\mbox{subject to} \quad & \bar{\psi}_{tot}(\delta) \leq 0,
\end{align*}
where $\bar{\psi}_{tot}(\delta)$ is the total exponent for the face $F_0^k$ with $k=\delta n$ as defined in (\ref{eq:totalimprovedexponent}). From the above definition and Theorem \ref{thm:mainresult}, it can be concluded that
whenever $\delta < \bar{\delta}$, 
we are guaranteed to be able to recover the corresponding sparse signal with overwhelming probability. We call the quantity $\bar{\delta}$ to be the guaranteed bound on recoverable sparsity levels $\delta$.
In view of our choice of weights, which is described by the function $f(u) = 1 + \rho u$ with $\rho \geq 0$, higher values of $\rho$ correspond to more steeply varying weights. Intuitively, one may expect  that higher values of $\rho$, will make the weighted $\ell_1$ norm cost function favor non-zero entries in the first few indices which may allow the threshold $\bar{\delta}$ to be larger.  

We will show that the quantity $\bar{\delta}$ follows an increasing trend as described above. To demonstrate this for a certain choice of the parameters, we fix the compression ratio
$\frac{m}{n} = 0.5$ and compute the bound using the methods developed in Section \ref{subsec:firstfamilythreshold}. Based on Figure \ref{fig:thresholdvsm}  we choose $r=30$ as a reasonable value for the 
accuracy parameter in our computations.
Figure \ref{fig:firstfaceth} shows the dependence of this threshold on the value of $\rho$ which governs the
weight function $f(.)$.  The value of the bound at $\rho = 0$ corresponds to the case when $F_0^{\delta n}$ is a face of the regular $\ell_1$ ball. This is the threshold for $\delta$ below which standard $\ell_1$-minimization 
succeeds in recovering a signal with sparsity level $\delta$ with overwhelming probability. As expected this value matches the value reported earlier in \cite{Donoho1}.

	\begin{figure}  
		\begin{center}
		\includegraphics[scale = 0.38]{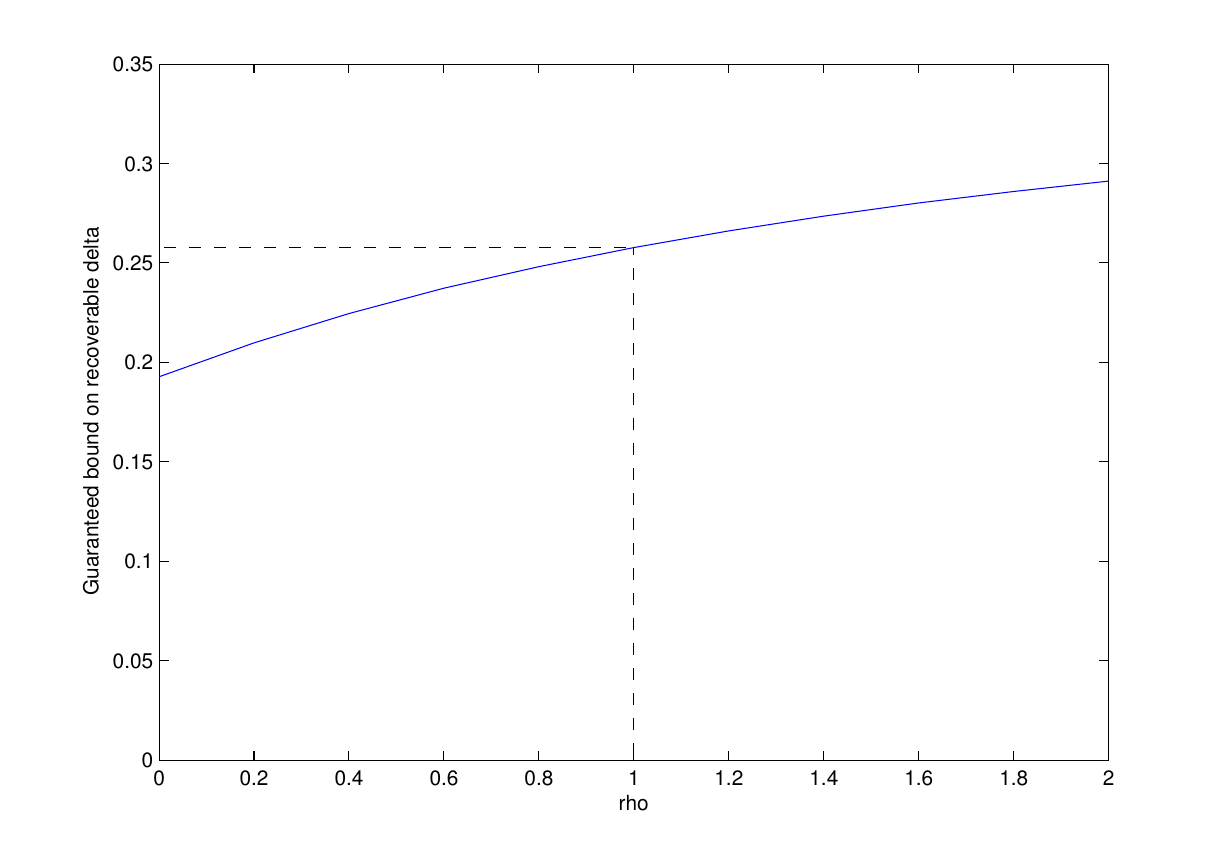}
		\end{center}
		\caption{Guaranteed bound on recoverable $\delta$ vs $\rho$ for the ``leading face" $F_0^{\delta n}$, computed
		using the methods of Section \ref{subsec:firstfamilythreshold} for $r = 30$} \label{fig:firstfaceth}
	\end{figure}

 To evaluate the accuracy of the bound, we then compare the values of the threshold predicted by the guaranteed bound to that obtained empirically through simulations for two different values of the parameter $\rho$. For this, 
we set $m = 200, n= 400$, and obtain the fraction of times $\mathbf{x} \in F_0^{\delta n}$ failed to be recovered correctly via weighted $\ell_1$-minimization from a total of 500 iterations. This is done by randomly generating a vector $\mathbf{x}$ for each iteration with support $1, \ldots, k$ and using weighted $\ell_1$-minimization to recover that $\mathbf{x}$ from its measurements given by $y = Ax$. Figure \ref{fig:firstfacesim0} and
Figure \ref{fig:firstfacesim1}  show this plot for 
$\rho = 0$ and $\rho = 1$ respectively. The vertical lines in the plots (marked A and B respectively) denote the guaranteed bounds corresponding to the value of $\rho$ in Figure \ref{fig:firstfaceth}. The simulations show a rapid fall in the
empirical value of $\mathbf{P}(E|\mathbf{x} \in F_0^{\delta n})$ around the theoretical guaranteed bound as we decrease the value of $\delta$. This indicates that the guaranteed bounds developed are fairly tight.

	\begin{figure}
		\centering
		\subfloat[nothing][$\rho  = 0$] {
		\includegraphics[scale = 0.34]{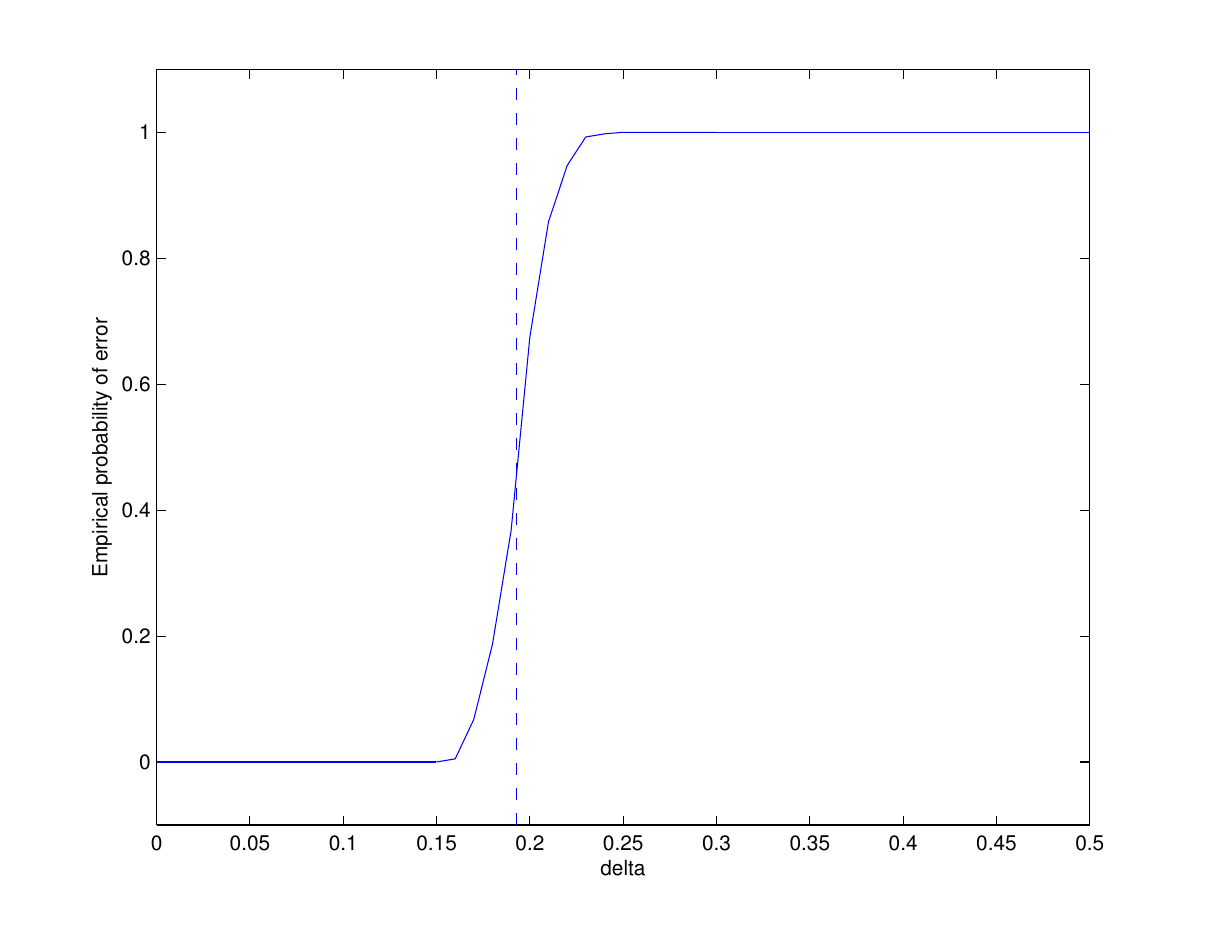}
		\label{fig:firstfacesim0} }
		\quad
		\subfloat[something][$\rho = 1$] {
		\includegraphics[scale = 0.34]{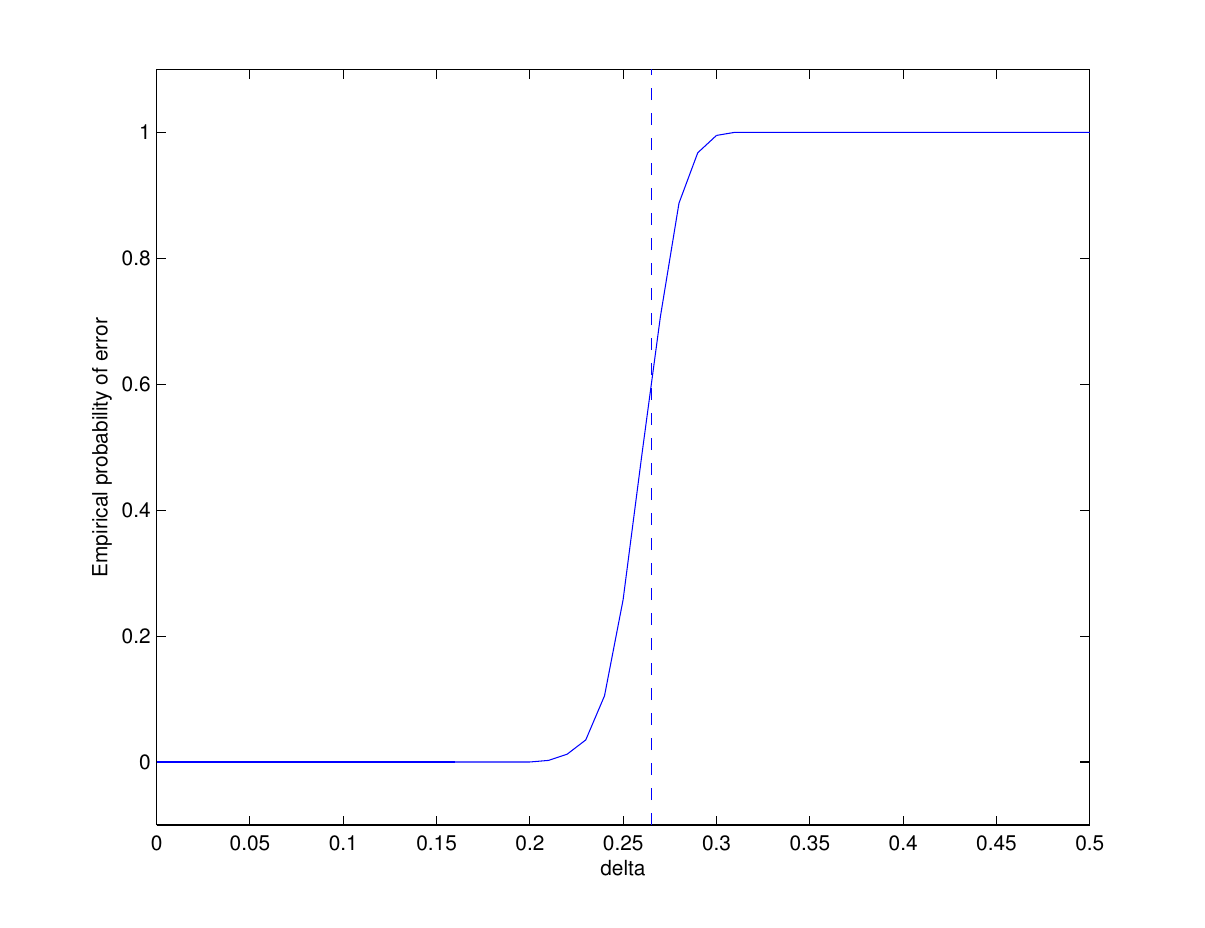}	
		\label{fig:firstfacesim1} }
		\caption{Empirical probability of error $\mathbf{P}(E|\mathbf{x} \in F_0^{\delta n})$
		 vs $\delta$ for $\rho = 0 $ and $\rho = 1$ respectively obtained through simulations using $m = 200$, $n = 400$ over $100$ experiments. The vertical line in each figure refers to
		the guaranteed bound on recoverable $\delta$ for the corresponding $\rho$ computed using methods of Section \ref{subsec:firstfamilythreshold}.}
	\end{figure}
\end{subsection}

\begin{subsection}{Performance of weighted $\ell_1$-minimization.} \label{sec:withprobsim}
 In this subsection, we compute the theoretical bound on recoverable sparsity levels using the methods  of this paper. We use the probability function in (\ref{probfunction}). We first start with the weight function $f(.)$ in item (i).
  The choice of $\rho$ plays an important role in the success of weighted $\ell_1$-minimization and it would be of interest to be able to obtain the value of $\rho$ for which one gets best performance. One way to estimate the effect of $\rho$ is to compute the guaranteed bound $\bar{\delta}$ (\ref{def:bardelta}) as suggested in Section \ref{sec:F0} and observe the trend. We can then pick the value of $\rho$
which maximizes $\bar{\delta}$.

To demonstrate this via computations we
 fix the ratio $\alpha = \frac{m}{n} = 0.5$ and compute the guaranteed bound on recoverable $\delta$ (which denotes the expected fraction of non-zero components of the signal) using the methods developed in
 Section \ref{sec:withprob}. The accuracy parameter $r$ is fixed at 60.
Figure \ref{fig:withprobth60} shows the dependence of $\bar{\delta}$ on the values of $\rho$ for three different values of $c$. 
The curves suggest that for larger values of $c$, which correspond to more rapidly decaying probabilities, the value of $\rho = \rho^{*}(c)$ which maximizes $\bar{\delta}$ is also higher. At the same time, the value of  $\bar{\delta}$ evaluated at $\rho = \rho^{*}(c)$ also increases with 
increasing $c$. This suggests that rapidly decaying probabilities allow us to recover less sparse signals by using an appropriate weighted 
$\ell_1$-minimization. 


	
	\begin{figure}
		\begin{center}
		\includegraphics[scale = 0.37]{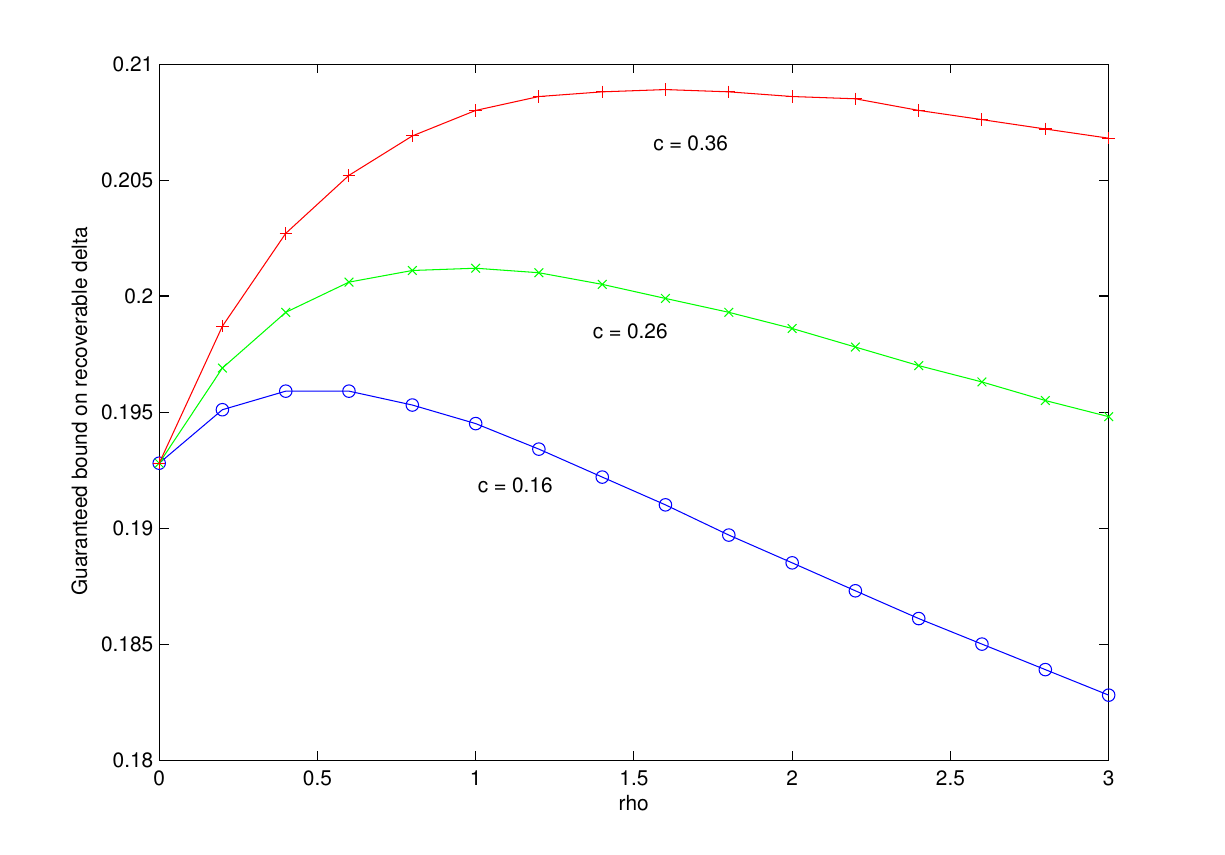}
		\end{center}
		\caption{Guaranteed bound on recoverable $\delta$ vs $\rho$ for $\frac{m}{n}=0.5$ computed using the methods of this paper, for $c=0.16$, $c=0.26$ and $c=0.36$. The parameter $r$ is fixed at 60.}
		\label{fig:withprobth60}
	\end{figure}
	
	\begin{figure}
		\begin{center}
		\includegraphics[scale = 0.37]{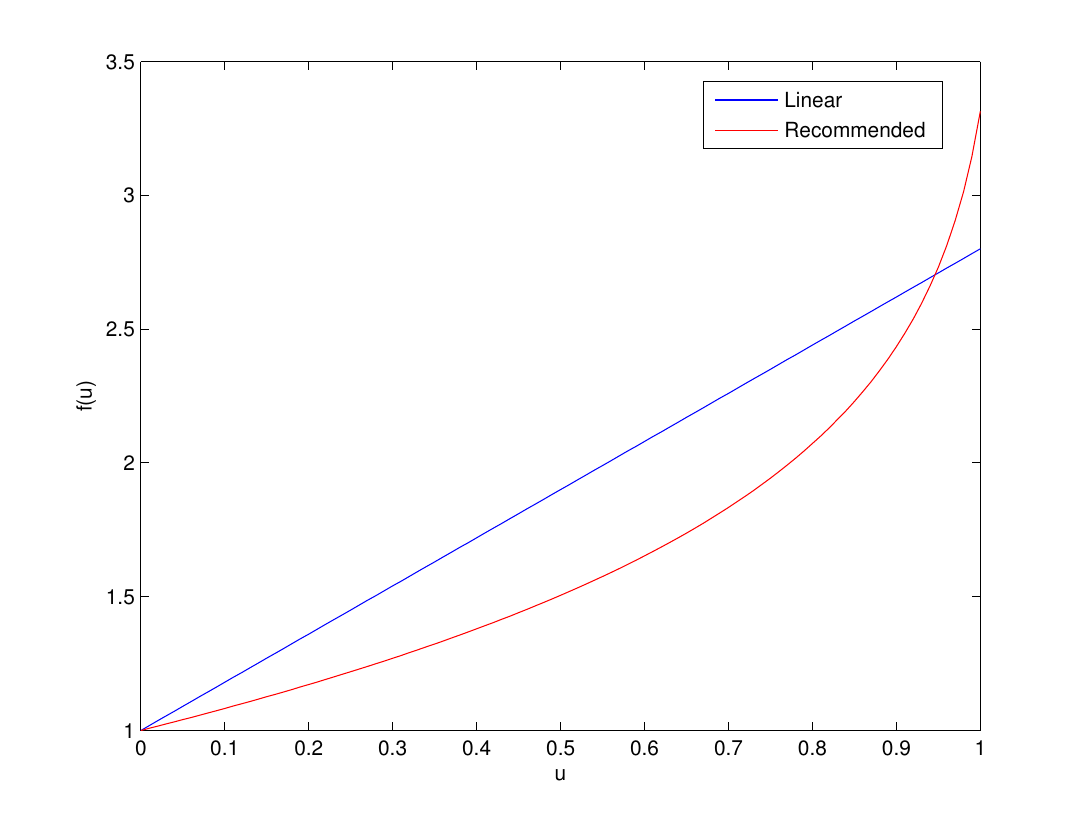}
		\end{center}
		\caption{Comparison of the weight function $f(.)$ obtained by finding the best among the linearly parameterized family and the recommended $f(.)$ from (\ref{matching}) for $\delta = 0.19$, $c = 0.36$.}
		\label{fig:linVsrec}
	\end{figure}

Second, we compute the recommended choice of weight function by numerically solving for $f(.)$ in (\ref{matching}) at the required evaluation points. We then compute the guaranteed bound $\bar{\delta}$ defined in (\ref{def:bardelta}) by using this choice of weights. The parameters 
$\alpha$ and $r$ are fixed at $0.5$ and $60$ as in the previous case above.

Comparing the values of $\bar{\delta}$ from Table \ref{tbl:rhostar} and Table \ref{tbl:bardelta}, we can see that using the recommended choice of weight function $f(.)$ does indeed have a larger guaranteed bound on recoverable sparsity for each
of the choices of $p(.)$ considered. 
	
To provide evidence that weighted $\ell_1$-minimization indeed improves performance, we conduct simulations to compare standard and weighted $\ell_1$-minimization. We fix the value of $\delta$ to be $0.185$. We then explore the effect of choosing different values of the model parameter $c$ in (\ref{probfunction}) on recoverability. We sample random signals with supports generated by the distribution imposed by $p(.)$. We then choose the weight function $f(.)$ in two different ways. One by utilizing the curves computed in Figure \ref{fig:withprobth60} to make the best choice of $\rho$ (see Table \ref{tbl:rhostar}) and second by using the recommended choice from (\ref{matching}). We use weighted $\ell_1$-minimization corresponding to these choice of weights to recover the generated signal from its measurements. We compute the fraction of the experiments for which this method fails to recover the correct signal over
500 experiments. The values of $m$ and $n$ are chosen to be $500$ and $1000$ respectively. To compare the performance of weighted $\ell_1$-minimization to standard $\ell_1$-minimization, we repeat the same procedure but use standard $\ell_1$-minimization to recover the signal. Figure \ref{fig:withprobsimPfvsC} compares the values generated by each method. Notice how the performance of the standard $\ell_1$-minimization method remains more or less invariant with increasing $c$. This shows that standard $\ell_1$-minimization fails to exploit the extra information present because of the knowledge of $c$ (i.e. the decaying nature of the probabilities) and its performance depends only on the value of $\delta$, the expected fractional level of sparsity and is insensitive to the tilt of the model given by $c$. On the other hand, the performance of weighted $\ell_1$-minimization improves with $c$ for both choice of weight functions, with the recommended choice (\ref{matching}) showing the best performance.

\begin{table}
	\caption{$c$ vs $\rho^{*}(c)$ using theoretical guaranteed bounds with $r=60$ (Figure \ref{fig:withprobth60}) }.
	\label{tbl:rhostar}
	\begin{center}
\begin{tabular}{| c || c | c | c | c |}   %
\hline
	c & 0   & 0.16 & 0.26 & 0.36 \\
\hline	
	$\rho$ &  0  & 0.6 & 1.0 & 1.6 \\
\hline
	$\bar{\delta}$ & 0.1928&  0.1959& 0.2012 & 0.2089 \\
\hline	
\end{tabular}
\end{center}

\end{table}

\begin{table}
	\caption{$\bar{\delta}$ vs $c$ for recommended choice of $f(.)$ in (\ref{matching})}.
	\label{tbl:bardelta}
	\begin{center}
\begin{tabular}{| c || c | c | c | c |}  %
\hline
	c & 0   & 0.16 & 0.26 & 0.36 \\
\hline	
	$\bar{\delta}$ & 0.1928& 0.1960  & 0.2014  & 0.2097 \\
\hline
\end{tabular}
\end{center}

\end{table}


	\begin{figure}
		\begin{center}
		\includegraphics[scale = 0.38]{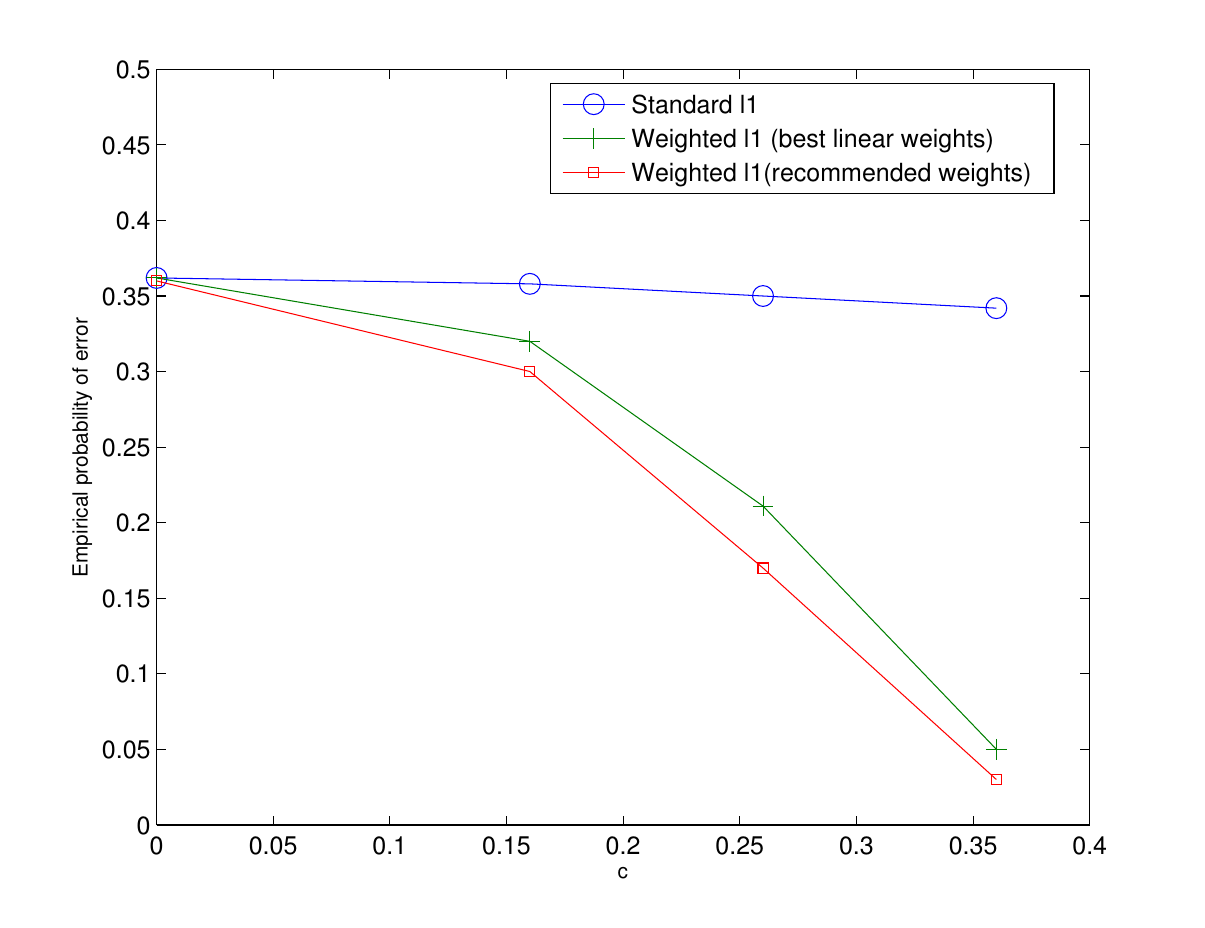}
		\end{center}
		\caption{Empirical probability of error $\mathbf{P}(E)$ of weighted $\ell_1$-minimization vs the tilt of the model given by the parameter $c$. Probability function  $p(u) = 0.185 - c(u-0.5)$ and weight
		function (i) (plus) $f(u) = 1 + \rho^*(c)$  where $\rho^*(c)$ is the optimal value of $\rho$ obtained form Figure \ref{fig:withprobth60} and (ii) (square) $f(u)$ chosen according to (\ref{matching}). Problem size is given by $m=500$, $n=1000$. Number of experiments $= 500$. }
		\label{fig:withprobsimPfvsC}
	\end{figure}

\end{subsection}
\end{section}

\section{Conclusion} \label{sec:conclusion}
In this paper we analyzed sparse signal recovery via weighted $\ell_1$-minimization for a special class of probabilistic signal model, namely when the 
weights are uniform samples of a continuous function. We leveraged the techniques developed in 
\cite{Donoho1} and \cite{WeiyuThesis} to provide sufficient conditions under which weighted $\ell_1$-minimization succeeds in recovering the sparse signal with overwhelming probability. 
In the process, we also provided conditions under which certain 
special class of faces of the skewed cross-polytope get ``swallowed" under random projections.

A question central to the weighted $\ell_1$-minimization based approach is the optimal choice of weights. In this paper, we provide a general framework to choose these weights. The recovery capacity of this choice of weights can also be verified theoretically by computing the angle exponents using the methods of this paper.

\appendices
\section{proof of Lemmas \ref{lem:basic_exponent_monotonicity} and \ref{lem:internalexponentcurvature}}
\subsection{Proof of Lemma \ref{lem:basic_exponent_monotonicity}}
Let $G_0$ be the face whose vertices are given by $\frac{1}{w_1}e_1, \ldots, \frac{1}{w_k}e_k, \frac{1}{w_{k+1}}e_{k+1}, \ldots , \frac{1}{w_l}e_l$. Let $G$ be any face other than $G_0$ whose vertices are given by 
$\frac{1}{w_1}e_1, \ldots, \frac{1}{w_k}e_k, \frac{1}{w_{n_{k+1}}}e_{n_{k+1}}, \ldots , \frac{1}{w_{n_l}}e_{n_l}$. Consider forming a sequence of faces $G^0, G^1, \ldots, G^{l-k}$, where $G^0 = G$, and $G^{i+1}$ is
obtained from $G^i$ by swapping the vertices $\frac{1}{w_{n_{l-i+1}}} e_{n_{l-i+1}}$ and $\frac{1}{w_{l-i+1}} e_{l-i+1}$. Since $w_{n_{l-i+1}} \geq w_{l-i+1}$, the expression in Lemma \ref{lem:externalexpression} for
the external angle increases at each step. Hence, $\gamma(G_0,P) \geq \gamma(G,P)$.

For a fixed value of $l$, the exponent for the internal angle is only affected by the term $p_Z(0)$ in the expression for internal angle exponent in Lemma \ref{lem:internalexpression}. Also 
\begin{align*}
	p_Z(0) = 2 \int_0^{\infty} v p_{Y_0}(v) F_S(v) dv.
\end{align*}
Following the same procedure as above for generating the sequence of faces $G^i$, it can be seen that at each step the variance of some $Y_p$ is decreased while keeping the other $Y_i$s unchanged. Thus 
$F_S(v)$ in the above expression for $p_Z(0)$ increases at each step. Thus, the exponent for $\beta(F_0,G_0)$ is greater than or equal to that of $\beta(F_0,G)$.
\subsection{Proof of Lemma \ref{lem:internalexponentcurvature}}
We rewrite the expression for the combinatorial exponent from Section \ref{sec:firstfacecombinatorialexponent}:
	\begin{align*}
		\psi_{com}(\mathbf{h}) &= \frac{1-\delta}{r}  \sum_{i=1}^{r} H \left(   \frac{ h_i}{1-\delta} \right) \\
		&+ (\sum_{i=1}^{r} r h_i - \delta) \log 2.
	\end{align*}	
	The concavity of this function follows from the fact that the standard entropy function $H(.)$ is a concave function.
	From its expression in Section \ref{sec:firstfaceexternalangle} we observe that the external angle exponent is a linear function of $\mathbf{h}$ and hence a concave function.
	The concavity of the internal angle exponent is slightly more involved and we spend the rest of this section in proving it.
	
	The internal angle exponent as computed in Section \ref{sec:firstfaceinternalangle} is given by 
	\begin{align*}
		\psi_{int}(\mathbf{h})  &= -(\sum_{i=1}^{r}r h_i  - \delta) \log 2 \\
		&- \left(  \frac{\bar{c}_0^2}{2c_1}y^2 + \bar{c}_0 \lambda_0^{*}(y)  \right).
	\end{align*}
	The quantity $\bar{c}_0 = \bar{c}_0(\mathbf{h}) = \sum_{i} f_i h_i$ is a linear function of $\mathbf{h}$. So $\frac{\bar{c}_0(\mathbf{h})^2}{2c_1}y^2$ is a convex quadratic function of $\mathbf{h}$. 
	Therefore it suffices to prove that
	$F(\mathbf{h}) \triangleq \bar{c}_0(\mathbf{h}) \lambda_0^{*}(y) $ is convex in $\mathbf{h}$.
	Recall that
	\begin{align*}
		\lambda_0^{*}(y)  = \lambda_0^{*}(y,\mathbf{h})  =  \max_{s} sy - \frac{1}{r \bar{c}_0(\mathbf{h})} \sum_{i=1}^{r} h_i \lambda(s f_i).
	\end{align*}
	Therefore
	\begin{align*}
		F(\mathbf{h}) = \max_{s} \bar{c}_0(\mathbf{h}) sy - \frac{1}{r} \sum_{i=1}^{r} h_i \lambda(s f_i).
	\end{align*}
	Since the argument in the above maximization is linear in $\mathbf{h}$ it follows that $F(\mathbf{h})$ is convex in $\mathbf{h}$.

\section{Angle exponents for the typical face $F_p$}
Divide the interval $[0,1]$ into $r$ equally spaced intervals. Let the face $F$ in consideration have $n\delta_i$ indices in the $i^{th}$ interval. Also, let $g_i = r\delta_i$. The asymptotic exponents for this face can be
obtained easily by a straight-forward generalization of the procedure described in Section \ref{sec:firstface}.
We give the final expressions for the combinatorial, internal and external angle exponents for a given value of $g = (g_1, g_2, \ldots, g_r)^T$. In what follows, we use $f_i = f\left( \frac{i}{r} \right)$.
		\begin{subsubsection}{Combinatorial Exponent}
			\begin{align*}
				\psi_{com}(\mathbf{h}) &= \frac{1}{r} \sum_{i=1}^{r} (1 - g_i) H \left(  \frac{h_i}{1-g_i} \right) \\
				&+ r \sum_{i=1}^{r} (h_i - g_i),
			\end{align*}	
			where $H(.)$ is the binary entropy function with base $e$.
		\end{subsubsection}

		\begin{subsubsection}{Internal Angle Exponent}
			The negative of the internal angle exponent is given by
			\begin{align*}
				\psi_{int}(\mathbf{h},y) = - r \sum_{i=1}^{r} (h_i - g_i) - \left( \frac{c_0^2}{2c_1}y^2 +c_0 \lambda_0^{*}(y) \right),
			\end{align*}
			where
			\begin{align*}
				&c_0 = \frac{1}{r} \sum_{i=1}^{r} f_i h_i, \\
				&c_1 = \frac{1}{r} \sum_{i=1}^{r} f_i^2 g_i, \\
				&\lambda_0^{*}(y) = \max_{s} sy - \lambda_0(s), \\
				&\lambda_0(s) = \frac{1}{r} \sum_{i=1}^{r} \frac{1}{c_0} \lambda(sf_i) (h_i).
			\end{align*}	
			Here $\lambda(u) = \frac{u^2}{2} + \log (2 \Phi(u))$ is the characteristic function of the standard half-normal distribution.
		\end{subsubsection}
		
		\begin{subsubsection}{External Angle Exponent}
		The negative of the external angle exponent is given by
			\begin{align*}
				\psi_{ext}(\mathbf{h},x) = -\left( c_2x^2 - \log(G_0(x)) \right),
			\end{align*}
			where,
			\begin{align*}
				c_2 &=  \frac{1}{r} \sum_{i=1}^{r} f_i^2 (g_i + h_i), \ \mbox{and} \\
				\log(G_0(x)) &= \int_{0}^{1}  \log(\erf (xf(u)))du \\
				&-  \frac{1}{r} \sum_{i=1}^{r}  \log(\erf (xf_i)) (h_i + g_i).
			\end{align*}
		\end{subsubsection}
		
		\begin{subsubsection}{Total Exponent} \label{completetotalexponent}
			Combining the exponents we define the total exponent as
			\begin{align*}
				\psi_{tot} = \max_{\mathbf{h},x,y} \quad &\psi_{com} + \psi_{int}(\mathbf{h},y) + \psi_{ext}(\mathbf{h},x) \\
				\mbox{subject to} \quad &\frac{1}{r} \sum_{i=1}^{r} h_i \geq \alpha - \delta, \\
								      &0 \leq h_i \leq 1 - g_i,
			\end{align*}
			where $\delta = \frac{1}{r} \sum_{i=1}^{r}g_i$. From Theorem \ref{thm:firstfacetotalexponent}, the total exponent satisfies
			\begin{align*}
				\frac{1}{n} \log (\mathbf{P}(E|\mathbf{x} \in F))  \leq  \bar{\psi}_{tot} + o(1).
			\end{align*}
			So as long as the quantity $\bar{\Psi}_{tot} < 0$, weighted $\ell_1$-minimization succeeds in recovering the sparse signal with an exponentially small probability of failure.
		\end{subsubsection}



\ifCLASSOPTIONcaptionsoff
  \newpage
\fi

\bibliographystyle{IEEEtran}
\bibliography{IEEEabrv,References}

\end{document}